\providecommand{\U}[1]{\protect\rule{.1in}{.1in}}
\newtheorem{theorem}{Theorem}
\newtheorem{lemma}[theorem]{Lemma}
\newtheorem{proposition}[theorem]{Proposition}
\begin{document}
\preprint{ }

\title{Gaussian hypothesis testing and quantum illumination}
%\title{Asymmetric hypothesis testing of Gaussian states and quantum illumination}
\author{Mark M. Wilde}
\affiliation{Hearne Institute for Theoretical Physics, Department of Physics and Astronomy,
Center for Computation and Technology, Louisiana State University, Baton
Rouge, Louisiana 70803, USA}

\author{Marco Tomamichel}
\affiliation{Centre for Quantum Software and Information and School of Software, University of Technology Sydney, Broadway, NSW 2007, Australia}

\author{Seth Lloyd}
\affiliation{Research Laboratory of Electronics and the Department of Mechanical
Engineering, Massachusetts Institute of Technology, Cambridge, Massachusetts
02139, USA}

\author{Mario Berta}
\affiliation{Institute for Quantum Information and Matter, California Institute of
Technology, Pasadena, California 91125, USA}

\keywords{}
\pacs{PACS number}

\begin{abstract}
Quantum hypothesis testing is one of the most basic tasks in quantum
information theory and has fundamental links with quantum communication
and estimation theory. In this paper, we establish a formula that
characterizes the decay rate of the minimal Type-II error probability in a
quantum hypothesis test of two Gaussian states given a fixed constraint on the
Type-I error probability. This formula is a direct function of the mean
vectors and covariance matrices of the quantum Gaussian states in question. We
give an application to quantum illumination, which is the task
of determining whether there is a low-reflectivity object embedded in a target
region with a bright thermal-noise bath. For the asymmetric-error setting, we find that a quantum illumination
transmitter can achieve an error probability exponent stronger than a
coherent-state transmitter of the same mean photon number, and furthermore,
that it requires far fewer trials to do so. This occurs when the background thermal noise is either low or bright, which means that a quantum advantage is even easier to witness than in the symmetric-error setting because it occurs for a larger range of parameters. Going forward from here, we expect our formula to have applications in settings well beyond those considered in
this paper, especially to quantum communication tasks involving quantum
Gaussian channels.
\end{abstract}

\volumeyear{ }
\volumenumber{ }
\issuenumber{ }
\eid{ }
\date{\today}
\startpage{1}
\endpage{102}
\maketitle

\textit{Introduction}---Hypothesis testing is critical for the scientific
method \cite{LR08}, underlying our ability to distinguish various
models of reality and draw conclusions accordingly. 
It also has fundamental links with
both communication~\cite{B74} and estimation theory~\cite{B60}. By
increasing the number of independent samples observed in a given experimental
setup, one can reduce the probability of making an incorrect inference, thus
increasing the confidence in conclusions drawn from the experiment.

In the most basic setting of binary hypothesis testing the goal is to distinguish
two hypotheses (null and alternative). There are two ways that one can
err: a Type-I error (``false alarm'')
occurs when rejecting the null hypothesis when it is in fact
true, and analogously a Type-II error (``false
negative'') occurs when incorrectly rejecting the alternative hypothesis. If it is
possible to obtain many independent samples, one can study how error
probabilities decay as a function of the number of samples for an optimal sequence of tests.
Most prominently, the Chernoff bound~\cite{chernoff1952} tells us that both error probabilities decay exponentially fast (in the number of samples) for an appropriately chosen sequence of tests. Beyond this, it is often desirable to treat the two types of errors asymmetrically. For example, the experimenter
may only require a fixed bound on the ``false alarm'' probability and
then seek to minimize the ``false negative'' probability subject to this
constraint. The well known result here is the Chernoff--Stein lemma (sometimes
called Stein's lemma) \cite{chernoff1952}, which establishes how fast the ``false negative'' probability decays in this setting.

Since the rise of quantum information science, researchers have generalized these notions to the fully quantum setup,
which is arguably more fundamental than the classical settings discussed
above. Here the basic setting involves determining whether $M\geq1$
quantum systems are described by the density operator $\rho^{\otimes M}$ or
another density operator $\sigma^{\otimes M}$, and the experimenter is allowed
to perform a collective quantum measurement on all $M$ systems in order to
guess which is the case. The fundamental results are the quantum Chernoff
bound \cite{PhysRevLett.98.160501,ANSV08}, which states that the quantum
Chernoff information is the optimal decay rate when minimizing both error
probabilities simultaneously, and the quantum Stein's lemma \cite{HP91,ON00},
which states that the quantum relative entropy between $\rho$ and
$\sigma$ is the optimal decay rate for the Type-II error probability given a
fixed (independent of~$M$) constraint on the Type-I error probability. In more recent years, we have seen strong refinements of quantum
Stein's lemma \cite{li12,TH12,CMMAB08,JOPS12,DPR15} that characterize the decay in higher orders of $M$ and are crucial for a finite-size analysis.

One of the major applications of the results of quantum hypothesis testing is
quantum illumination \cite{L08}. In the setting of quantum illumination, a
source emits photons entangled in signal and idler beams, and the signal beam
is subsequently subjected to a modulation, loss, and environmental noise. A
quantum receiver then makes a collective measurement on both the returned
signal and idler beams in order to determine which modulation was applied. The
typical task considered in previous work is to determine whether a target
region containing a bright thermal-noise bath has a low-reflectivity object
embedded \cite{L08,TEGGLMPS08}. Alternatively, one could also use the quantum
illumination setup as a secure communication system, as proposed in
\cite{S09}. After the original proposal of quantum illumination \cite{L08}, a
full Gaussian state treatment appeared
\cite{TEGGLMPS08} and strengthened the predictions of \cite{L08}. The upshot
is that quantum illumination can offer a significant performance advantage
over a classical coherent-state transmitter of the same average photon number,
when considering the sensing application mentioned above. To date, several
experiments have been conducted that demonstrate the advantage quantum
illumination offers \cite{LRDOBG13,ZTZWS13,ZMWS15,BGWVSP15}.

Hitherto quantum illumination has mostly been considered in the symmetric-error setting \cite{TEGGLMPS08,BGWVSP15},
and as such, one of the main technical tools
employed in the analysis of quantum illumination is the quantum Chernoff
bound. However, there are many scenarios where one is interested in the performance of quantum
illumination in the asymmetric-error setting. Indeed, one might be willing to accept a particular Type-I\ error (``false alarm'') probability (the error being to declare a target present when in fact it is not), and then minimize the Type-II\ error (``false negative'') probability subject to this constraint.

In this paper, we determine the second-order refinement of quantum Stein's Lemma in Gaussian quantum hypothesis testing. As our main result we derive an analytical formula that expresses the second-order behavior for any two Gaussian states as a function of their vector means and covariance matrices. Our result has applications to quantum illumination in the asymmetric-error setting, where we find that there are significant gains over a classical coherent-state emitter. Notably, we find that the quantum advantage is even easier to witness than in the symmetric-error setting because it occurs for a larger range of parameters.

We expect our formula to have applications well
beyond the setting considered here, to various tasks in quantum communication
theory. In fact, it is the basis for the
strongest known upper bounds on quantum key distribution protocols conducted
over quantum Gaussian channels \cite{WTB16,KW17}. In light of this, we expect our result to be useful in establishing sharp refinements of various capacities
of quantum Gaussian communication channels, when combined with generalizations
of the methods from \cite{TT13,WRG16,DTW14,BDL15,TBR15}.

To elaborate on our main result, if the task is to
distinguish $\rho^{\otimes M}$ from $\sigma^{\otimes M}$ and the Type-I error
cannot exceed $\varepsilon\in\left(  0,1\right)$, then the optimal Type-II
error probability $\beta$ takes the exponential form
\begin{equation}
\exp\left[  -\left(  Ma+\sqrt{Mb}\Phi^{-1}(\varepsilon)+O(\ln M)\right)
\right]  .\label{eq:Gaussian-approx}
\end{equation}
The optimal constant $a\geq0$ was identified in \cite{HP91,ON00} to be the quantum relative entropy \cite{U62},
defined as
$
a=D(\rho\Vert\sigma)\equiv\left\langle \ln\rho-\ln\sigma\right\rangle _{\rho}
$
for faithful $\sigma$ where we used the convention $\left\langle \cdot \right\rangle _{\rho}
\equiv\operatorname{Tr}\{\rho\, \cdot \}$. The optimal constant $b\geq0$ was
identified in \cite{li12,TH12,DPR15} to be the quantum relative entropy
variance, defined in terms of the variance of the operator $\ln\rho-\ln\sigma$
\begin{equation}
b=V(\rho\Vert\sigma)\equiv\langle\left[  \ln\rho-\ln\sigma-D(\rho\Vert
\sigma)\right]  ^{2}\rangle_{\rho}\label{eq:relative-entropy-variance} \,.
\end{equation}
In the above, we have also used the cumulative distribution function for a
standard normal random variable:
$
\Phi(y)   \equiv\frac{1}{\sqrt{2\pi}}\int_{-\infty}^{y}dx\,\exp\left(
-x^{2}/2\right)  $.
The derivation of~\cite{DPR15} also applies to particular states on separable Hilbert spaces~\cite{datta_private}, of which Gaussian states are examples.

An explicit formula for the quantum relative entropy between any two Gaussian
states, as a function of their mean vectors and covariance matrices, was given
in \cite{PhysRevA.71.062320}\ and refined in \cite{PLOB15}. Here we derive an
explicit formula for the quantum relative entropy
variance of two Gaussian states, given as a function of their mean vectors and
covariance matrices. The formula allows for a deeper understanding of quantum hypothesis
testing of Gaussian states.
We state our result after a brief recollection of
the Gaussian state formalism (see \cite{Arvind1995,adesso14}\ for detailed reviews),
and provide a detailed proof in the appendix. Finally, we apply our formula in the context of quantum illumination, giving a
characterization of its performance in the asymmetric-error setting.

\textit{Related work}---The authors of \cite{SB14} considered asymmetric
hypothesis testing of quantum Gaussian states, deriving a formula for
the quantum Hoeffding bound \cite{PhysRevA.76.062301,N06,ANSV08,0305-4470-35-50-307}\ in the
context of Gaussian state discrimination. However, the setting of the quantum
Hoeffding bound is conceptually different from what we consider here.

\textit{Gaussian state formalism---}We begin by reviewing some background on
Gaussian states and then review a formula for quantum relative entropy from
\cite{PhysRevA.71.062320,PLOB15}\ (see \cite{Arvind1995,PLOB15} for more details on the conventions used). Our development applies to $n$-mode Gaussian states, where $n$ is some fixed
positive integer. Let $\hat{x}_{j}$ denote each quadrature operator ($2n$ of
them for an $n$-mode state), and let $\hat{x}\equiv\left[  \hat{q}_{1}
,\ldots,\hat{q}_{n},\hat{p}_{1},\ldots,\hat{p}_{n}\right]  \equiv\left[
\hat{x}_{1},\ldots,\hat{x}_{2n}\right]  $ denote the vector of quadrature
operators, so that the first~$n$ entries correspond to position-quadrature
operators and the last~$n$ to momentum-quadrature operators. The quadrature
operators satisfy the commutation relations:
\begin{equation}\label{eq:symplectic-form}
\left[  \hat{x}_{j},\hat{x}_{k}\right]  =i\Omega_{j,k},
\end{equation}
where $\Omega=
\begin{bmatrix}
0 & 1\\
-1 & 0
\end{bmatrix}
\otimes I_{n}$ and $I_{n}$ is the $n\times n$ identity matrix. We also take the annihilation
operator $\hat{a}=\left(  \hat{q}+i\hat{p}\right)  /\sqrt{2}$. Let $\rho$ be a
Gaussian state, with the mean-vector entries $\left\langle \hat{x}_{j}\right\rangle _{\rho}=\mu_{j}^{\rho}$, and let $\mu^{\rho}$ denote the mean vector. The entries of the Wigner
function covariance matrix $V^{\rho}$\ of $\rho$ are given by
\begin{equation}
V_{j,k}^{\rho}\equiv\frac{1}{2}\left\langle \left\{  \hat{x}_{j}-\mu_{j}
^{\rho},\hat{x}_{k}-\mu_{k}^{\rho}\right\}  \right\rangle _{\rho}.
\label{eq:covariance-matrices}
\end{equation}
A $2n\times2n$ matrix $S$ is symplectic if it preserves the symplectic form:
$S\Omega S^{T}=\Omega$. According to Williamson's theorem \cite{W36}, there is
a diagonalization of the covariance matrix $V^{\rho}$ of the form,
$
V^{\rho}=S^{\rho}\left(  D^{\rho}\oplus D^{\rho}\right)  \left(  S^{\rho
}\right)^{T},
$
where $S^{\rho}$ is a symplectic matrix and $D^{\rho}\equiv\operatorname{diag}
(\nu_{1},\ldots,\nu_{n})$ is a diagonal matrix of symplectic eigenvalues such
that $\nu_{i}\geq1/2$ for all $i\in\left\{  1,\ldots,n\right\}  $. We can
write the density operator $\rho$ in the exponential form
\cite{PhysRevA.71.062320,K06,H10,H13book},
\begin{align}
&\rho=Z_{\rho}^{-1/2}\exp\left[  -\frac{1}{2}(\hat{x}-\mu^{\rho})^{T}G_{\rho
}(\hat{x}-\mu^{\rho})\right], \label{eq:exp-form}\\
&\mathrm{with}\quad 
\quad G_{\rho}  \equiv-2\Omega S^{\rho}\left[  \operatorname{arcoth}(2D^{\rho
})\right]  ^{\oplus2}\left(  S^{\rho}\right)  ^{T}\Omega, \label{eq:G_rho}
\end{align}
and $Z_{\rho}   \equiv\det(V^{\rho}+i\Omega/2)$, where $\operatorname{arcoth}(x)\equiv\frac{1}{2}\ln\!\left(  \frac{x+1}
{x-1}\right)$ with domain $\left(  -\infty,-1\right)  \cup\left(
1,+\infty\right)$.
%(here we exclusively consider the subset $\left(1,+\infty\right)$ of the domain).
Note that we can also write
$
G_{\rho}=2i\Omega\operatorname{arcoth}(2iV^{\rho}\Omega),
%\label{eq:more-compact-G-rho}
$
so that $G_{\rho}$ is represented directly in terms of the
covariance matrix $V^{\rho}$ (see Appendix~\ref{sec:matrix-diagonalizable} on how to compute the symplectic
decomposition of $V^{\rho}$). By inspection, the $G$ and $V$ matrices are
symmetric, which is critical in our analysis below. As a result,
$\operatorname{Tr}\{G\Omega\}=\operatorname{Tr}\{V\Omega\}=0$ because $G$ and
$V$ are symmetric while $\Omega$ is antisymmetric. In what follows, we adopt
the same notation for quantities associated with a density operator $\sigma$,
such as $\mu^{\sigma}$, $V^{\sigma}$, $S^{\sigma}$, $D^{\sigma}$, $Z_{\sigma}
$, and $G_{\sigma}$.

\textit{Relative entropy for Gaussian states}---We first revisit the relative
entropy calculation from \cite{PhysRevA.71.062320}, but following the
particular aspects of \cite{PLOB15}. Suppose for simplicity that $\rho$ and
$\sigma$ are zero-mean Gaussian states. By employing the exponential form in
\eqref{eq:exp-form}, we see that
\begin{equation}
\left\langle \ln\rho-\ln\sigma\right\rangle _{\rho}=\frac{1}{2}\left[  \ln
Z_{\sigma}-\ln Z_{\rho}-\left\langle \hat{x}^{T}\Gamma\hat{x}\right\rangle
_{\rho}\right],
\end{equation}
where $\Gamma\equiv G_{\rho}-G_{\sigma}$ is symmetric. To evaluate the expectation
$\left\langle \hat{x}^{T}\Gamma\hat{x}\right\rangle _{\rho}$, we can use that $\hat{x}_{k}\hat{x}_{l}=\frac{1}{2}(\left\{  \hat{x}_{l},\hat{x}
_{k}\right\}  -\left[  \hat{x}_{l},\hat{x}_{k}\right]  )=\frac{1}{2}(\left\{
\hat{x}_{l},\hat{x}_{k}\right\}  -i\Omega_{l,k})$ and write $\left\langle
\hat{x}^{T}\Gamma\hat{x}\right\rangle _{\rho}$ as
\begin{equation}
%\left\langle \hat{x}^{T}\Gamma\hat{x}\right\rangle _{\rho}  &  =
\sum
_{k,l}\Gamma_{k,l}\left\langle \hat{x}_{k}\hat{x}_{l}\right\rangle _{\rho
}%\label{eq:rel-ent-derive-1}
%&  =\frac{1}{2}\sum_{k,l}\Gamma_{k,l}\left\langle \left\{  \hat{x}_{l},\hat
%{x}_{k}\right\}  \right\rangle _{\rho}-\frac{i}{2}\operatorname{Tr}
%\{\Gamma\Omega\}\\
  =\sum_{k,l}\Gamma_{k,l}V_{l,k}^{\rho}=\operatorname{Tr}\{\Gamma V^{\rho}\},
\label{eq:rel-ent-derive-last}
\end{equation}
implying that
$
D(\rho\Vert\sigma)=\left[  \ln(Z_{\sigma}/Z_{\rho})-\operatorname{Tr}\{\Gamma
V^{\rho}\}\right]  /2.
$
For states $\rho$ and $\sigma$ that are not zero mean, one can incorporate a
shift into the above calculation to find that
\begin{equation}
D(\rho\Vert\sigma)=\left[  \ln(Z_{\sigma}/Z_{\rho})-\operatorname{Tr}\{\Gamma
V^{\rho}\}+\gamma^{T}G_{\sigma}\gamma\right]  /2, \label{eq:rel-ent-with-mean}
\end{equation}
where $\gamma\equiv\mu^{\rho}-\mu^{\sigma}$. Alternatively, one can write the
formula for relative entropy as
$
D(\rho\Vert\sigma)=[  \ln(Z_{\sigma})+\operatorname{Tr}\{G_{\sigma
}V^{\rho}\}+\gamma^{T}G_{\sigma}\gamma]  /2
-\sum_{i=1}^{n}g(\nu_{i}^{\rho}-1/2),
$
where $\{\nu_{i}^{\rho}\}_{i}$ are the symplectic eigenvalues of $\rho$ and
$g(x)\equiv(x+1)\ln(x+1)-x\ln x$~\footnote{In this way, we see that the expression is
well defined when $\rho$ does not have full support.}.

\textit{Relative entropy variance for Gaussian states}---The following theorem is our main result.

\begin{theorem} \label{th:main}
For Gaussian states $\rho$ and $\sigma$, the relative entropy variance from
\eqref{eq:relative-entropy-variance}\ is given by
\begin{equation}
V(\rho\Vert\sigma)=\frac{\operatorname{Tr}\{(\Gamma V^{\rho})^2\}}{2}
+\frac{\operatorname{Tr}\{(\Gamma\Omega)^2\}}{8}
+\gamma^{T}G_{\sigma}V^{\rho}G_{\sigma}\gamma, \label{eq:rel-ent-var-Gaussian}
\end{equation}
where $\Gamma\equiv G_{\rho}-G_{\sigma}$, $G_{\rho}$ and $G_{\sigma}$ are
defined from \eqref{eq:G_rho}, $\Omega$ is defined in
\eqref{eq:symplectic-form}, $V^{\rho}$ is defined in
\eqref{eq:covariance-matrices}, and $\gamma\equiv\mu^{\rho}-\mu^{\sigma}$.
\end{theorem}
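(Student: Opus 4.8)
The plan is to express $V(\rho\Vert\sigma)=\langle(\ln\rho-\ln\sigma)^{2}\rangle_{\rho}-D(\rho\Vert\sigma)^{2}$ through Gaussian moment identities for the observable $\ln\rho-\ln\sigma$, which under the exponential form~\eqref{eq:exp-form} is a quadratic-plus-linear form in the quadratures. First I would translate $\rho$ to the origin by setting $\hat{y}\equiv\hat{x}-\mu^{\rho}$, so that $\langle\hat{y}\rangle_{\rho}=0$ while the covariance matrix of $\rho$ in the $\hat{y}$ variables remains $V^{\rho}$. Inserting~\eqref{eq:exp-form} for both states and using $\hat{x}-\mu^{\sigma}=\hat{y}+\gamma$ gives $\ln\rho-\ln\sigma=c-\tfrac{1}{2}A+B$, where $A\equiv\hat{y}^{T}\Gamma\hat{y}$, $B\equiv(G_{\sigma}\gamma)^{T}\hat{y}$, and $c$ is a scalar built from $\ln(Z_{\sigma}/Z_{\rho})$ and $\gamma^{T}G_{\sigma}\gamma$. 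Since $c$ does not contribute to a variance, the task reduces to computing $\mathrm{Var}_{\rho}(A)$, $\mathrm{Var}_{\rho}(B)$, and their cross term.

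The cross term and $\mathrm{Var}_{\rho}(B)$ are the routine part. Because $B$ is linear and $AB$, $BA$ are cubic in the centered quadratures, and odd moments of a zero-mean Gaussian state vanish (no Wick pairing exists), I get $\langle B\rangle_{\rho}=0$ and $\langle AB+BA\rangle_{\rho}=0$, so the cross term $\langle AB+BA\rangle_{\rho}-2\langle A\rangle_{\rho}\langle B\rangle_{\rho}$ is zero and $V(\rho\Vert\sigma)=\tfrac{1}{4}\mathrm{Var}_{\rho}(A)+\mathrm{Var}_{\rho}(B)$. For $B$ I only need the ordered two-point function, which by splitting $\hat{y}_{j}\hat{y}_{k}$ into its anticommutator and commutator parts is $\langle\hat{y}_{j}\hat{y}_{k}\rangle_{\rho}=V_{jk}^{\rho}+\tfrac{i}{2}\Omega_{jk}=:C_{jk}$; contracting two copies of this against the symmetric matrix $(G_{\sigma}\gamma)(G_{\sigma}\gamma)^{T}$ annihilates the antisymmetric $\Omega$ part and leaves $\mathrm{Var}_{\rho}(B)=\langle B^{2}\rangle_{\rho}=\gamma^{T}G_{\sigma}V^{\rho}G_{\sigma}\gamma$, the third term of~\eqref{eq:rel-ent-var-Gaussian}.

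The substance of the proof is $\mathrm{Var}_{\rho}(A)$. I would write $A^{2}=\sum_{k,l,m,n}\Gamma_{kl}\Gamma_{mn}\,\hat{y}_{k}\hat{y}_{l}\hat{y}_{m}\hat{y}_{n}$ and evaluate the ordered four-point function by Wick's theorem for Gaussian states,
\begin{equation}
\langle\hat{y}_{k}\hat{y}_{l}\hat{y}_{m}\hat{y}_{n}\rangle_{\rho}=C_{kl}C_{mn}+C_{km}C_{ln}+C_{kn}C_{lm}.
\end{equation}
The first, ``disconnected'' pairing rebuilds $\langle A\rangle_{\rho}^{2}$ after using $\operatorname{Tr}\{\Gamma\Omega\}=0$, while the other two pairings, once the index sums are relabeled as matrix products, each contract to $\operatorname{Tr}\{\Gamma C\Gamma C^{T}\}$ with $C^{T}=V^{\rho}-\tfrac{i}{2}\Omega$; hence $\mathrm{Var}_{\rho}(A)=2\operatorname{Tr}\{\Gamma C\Gamma C^{T}\}$, the quantum analogue of the classical identity $\mathrm{Var}(y^{T}\Gamma y)=2\operatorname{Tr}\{(\Gamma\Sigma)^{2}\}$ for a Gaussian vector of covariance $\Sigma$. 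Expanding $C$ and $C^{T}$, the two terms linear in $\Omega$ combine into $\operatorname{Tr}\{\Gamma V^{\rho}\Gamma\Omega\}$, which vanishes because $\Gamma V^{\rho}\Gamma$ is symmetric, $\Omega$ is antisymmetric, and the trace is cyclic; what survives is $\operatorname{Tr}\{\Gamma C\Gamma C^{T}\}=\operatorname{Tr}\{(\Gamma V^{\rho})^{2}\}+\tfrac{1}{4}\operatorname{Tr}\{(\Gamma\Omega)^{2}\}$. Therefore $\tfrac{1}{4}\mathrm{Var}_{\rho}(A)=\tfrac{1}{2}\operatorname{Tr}\{(\Gamma V^{\rho})^{2}\}+\tfrac{1}{8}\operatorname{Tr}\{(\Gamma\Omega)^{2}\}$, and adding $\mathrm{Var}_{\rho}(B)$ yields~\eqref{eq:rel-ent-var-Gaussian}.

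I expect the main obstacle to be the bookkeeping in the four-point contraction rather than anything conceptual: since $C$ is not symmetric, one must carefully track which quadrature index sits to the left so that both nontrivial pairings genuinely produce $\Gamma C\Gamma C^{T}$ (and not, say, $\Gamma C\Gamma C$), because it is precisely the surviving $CC^{T}$ structure that generates the $\tfrac{1}{8}\operatorname{Tr}\{(\Gamma\Omega)^{2}\}$ term---a careless ordering makes that term cancel spuriously and would give a wrong formula. A secondary point, to be handled in the appendix, is to justify the exponential form~\eqref{eq:exp-form} and Wick's theorem rigorously on the infinite-dimensional Hilbert space and to treat states without full support, where the $G$ matrices require regularization; this can be done by a continuity/limiting argument.
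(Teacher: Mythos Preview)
Your proposal is correct and reaches the same formula, but the route you take for the quartic term differs from the paper's in an instructive way. The paper first uses commutators to rearrange $\hat{y}_{k}\hat{y}_{l}\hat{y}_{m}\hat{y}_{n}$ into Weyl-symmetric form, exploiting the dihedral symmetries of the coefficient $\Gamma_{kl}\Gamma_{mn}$ to absorb the resulting permutations, and only then applies the \emph{real} Isserlis identity with the symmetric covariance $V^{\rho}$; the $\tfrac{1}{8}\operatorname{Tr}\{(\Gamma\Omega)^{2}\}$ term then appears as the leftover from the commutator reshuffling. You instead apply the bosonic Wick theorem directly to the \emph{ordered} four-point function with the complex two-point function $C=V^{\rho}+\tfrac{i}{2}\Omega$, obtaining $\mathrm{Var}_{\rho}(A)=2\operatorname{Tr}\{\Gamma C\Gamma C^{T}\}$, from which the $\Omega$ contribution drops out of the $CC^{T}$ expansion once the symmetric/antisymmetric trace cancellations are noted. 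Your argument is shorter and makes the quantum correction term emerge more transparently; the paper's version has the virtue of staying with real phase-space objects throughout and tying the computation explicitly to the Wigner picture. Both handle the cross term and the $B$ term identically, and your remark that odd ordered moments vanish is exactly what the paper establishes by the same commutator-and-Isserlis reasoning. The only caveat is that you should state the ordered Wick identity $\langle\hat{y}_{k}\hat{y}_{l}\hat{y}_{m}\hat{y}_{n}\rangle_{\rho}=C_{kl}C_{mn}+C_{km}C_{ln}+C_{kn}C_{lm}$ as a lemma with a reference or a short characteristic-function proof, since the paper avoids invoking it directly.
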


To begin with, let us suppose that the states $\rho$ and
$\sigma$ have zero mean. The calculation then begins with the definition of
the relative entropy variance and proceeds through a few steps:
\begin{align}
V(\rho\Vert\sigma) &
%=\left\langle \left(  \ln\rho-\ln\sigma-D(\rho
%\Vert\sigma)\right)  ^{2}\right\rangle _{\rho}\\
  =\left\langle \left(  -\tfrac{1}{2}\hat{x}^{T}\Gamma\hat{x}+\tfrac{1}
{2}\left\langle \hat{x}^{T}\Gamma\hat{x}\right\rangle _{\rho}\right)
^{2}\right\rangle _{\rho}\\
&  =\tfrac{1}{4}\left[  \left\langle \left(  \hat{x}^{T}\Gamma\hat{x}\right)
^{2}\right\rangle _{\rho}-\left\langle \hat{x}^{T}\Gamma\hat{x}\right\rangle
_{\rho}^{2}\right]  \\
&  =\tfrac{1}{4}\left[  \left\langle \left(  \hat{x}^{T}\Gamma\hat{x}\right)
^{2}\right\rangle _{\rho}-\left[  \operatorname{Tr}\{\Gamma V^{\rho}\}\right]
^{2}\right]  ,\label{eq:rel-ent-var-1}
\end{align}
where the last line follows from
%\eqref{eq:rel-ent-derive-1}--
\eqref{eq:rel-ent-derive-last}. At this point, it
remains to calculate $\langle(\hat{x}^{T}\Gamma\hat{x})^{2}\rangle_{\rho}$,
which we do in Appendix~\ref{sec:details-zero-mean}. To summarize the calculation, one needs to expand
the operator $(\hat{x}^{T}\Gamma\hat{x})^{2}$, leading to an expression of
order four in the quadrature operators. After employing commutators and
anticommutators to bring this operator into Weyl symmetric form
\cite{PhysRevD.2.2161} and at the same time employing symmetries of the
dihedral subgroup of the symmetric group $S_{4}$, we can invoke Isserlis'
theorem \cite{Isserlis18}\ regarding higher moments of Gaussians to evaluate
it. We find that $\frac{1}{4}\left\langle \left(  \hat{x}^{T}\Gamma\hat{x}\right)  ^{2}\right\rangle
_{\rho}$ is equal to
\begin{equation}
\tfrac{1}{4}\left[  \operatorname{Tr}\{\Gamma V^{\rho}\}\right]
^{2}+\tfrac{1}{2}\operatorname{Tr}\{\Gamma V^{\rho}\Gamma V^{\rho
}\}\label{eq:2nd-moment-term}
+\tfrac{1}{8}\operatorname{Tr}\{\Gamma\Omega\Gamma\Omega\},
\end{equation}
which, after combining with \eqref{eq:rel-ent-var-1}, leads to the formula in
\eqref{eq:rel-ent-var-Gaussian} for zero-mean states. Incorporating a shift
then leads to the full formula in \eqref{eq:rel-ent-var-Gaussian}. We provide
full details of the calculation described  above in
Appendix~\ref{sec:details-zero-mean} and generalize it to arbitrary Gaussian
states in Appendix~\ref{sec:non-zero-mean-states}.
Appendix~\ref{sec:well-defined-not-full-support} argues how the formula is well defined even if $\rho$ does not have full support, and Appendix~\ref{sec:standard-form-cov-matrix}\ provides a further simplification
of the formula for two-mode Gaussian states with covariance matrices in
standard form.

\textit{Application to quantum illumination}---In the setting of quantum illumination a transmitter irradiates a target region basked in thermal noise in which a low-reflectivity object may be embedded. Let $\hat{a}_{S}$ denote
the field-mode annihilation operator for the signal mode which is transmitted.
We take the null hypothesis to be that the object is not there, and if this is
the case, the annihilation operator for the return signal is $\hat{a}_{R}
=\hat{a}_{B}$, where $\hat{a}_{B}$ represents an annihiliation operator for a
bath mode in a thermal state $\theta(N_{B})$ of mean photon number $N_{B}>0$.
We take the alternative hypothesis to be that the object is there, and in this
case, $\hat{a}_{R}=\sqrt{\eta}\hat{a}_{S}+\sqrt{1-\eta}\hat{a}_{B}$, where
$\eta\in(0,1)$ is related to the reflectivity of the object and $\hat{a}_{B}$
is now in a thermal state of mean photon number $N_{B}/\left(  1-\eta\right)
$~\footnote{We take this convention, as in \cite{TEGGLMPS08}, because one would not
expect the amount of thermal noise in the return signal to change depending on
whether the object is present.}.

If we prepare the signal mode in the coherent state $|\sqrt{N_{S}}\rangle$\ of
mean photon number $N_{S}>0$, then the null hypothesis state $\rho
_{\text{coh}}$ is a thermal state $\theta(N_{B})$ with mean vector $(0,0)$ and
covariance matrix $\left(  N_{B}+1/2\right)  I_{2}$, and the alternative
hypothesis state $\sigma_{\text{coh}}$ is a displaced thermal state, with mean
vector $(\sqrt{2\eta N_{S}},0)$ and covariance matrix $\left(  N_{B}
+1/2\right)  I_{2}$. It is also easy to check that the $G$ matrix from
\eqref{eq:G_rho} for both of these states is equal to $2\operatorname{arcoth}
(2N_{B}+1)I_{2}$.

Plugging into the formula for relative entropy and relative entropy variance,
we find that these quantities simplify as follows for the coherent-state
transmitter:
\begin{align}
D(\rho_{\text{coh}}\Vert\sigma_{\text{coh}})  &  =\eta N_{S}\ln(1+1/N_{B}),\\
V(\rho_{\text{coh}}\Vert\sigma_{\text{coh}})  &  =\eta N_{S}\left(
2N_{B}+1\right)  \ln^{2}(1+1/N_{B}).
\label{eq:coh-state-rel-ent-var}
\end{align}
In calculating the above, note that the covariance matrices for $\rho
_{\text{coh}}$ and $\sigma_{\text{coh}}$ are the same, so that $\Gamma=0$ in
this case, and we only need to calculate the terms involving $\gamma$ in
\eqref{eq:rel-ent-with-mean} and \eqref{eq:rel-ent-var-Gaussian}. What we see
is that as the signal photon number $N_{S}$\ increases, so does the first
order term $MD(\rho_{\text{coh}}\Vert\sigma_{\text{coh}})$ in the
Type-II\ error probability exponent, indicating a more rapid convergence to
zero. However, the second order term $\sqrt{Mb}\Phi^{-1}(\varepsilon)$ is
actually decreasing for all $\varepsilon\in(0,1/2)$ as $N_{S}$ increases, due
to the fact that $\Phi^{-1}(\varepsilon)<0$ for this range of $\varepsilon$.

Now if the transmitter has a quantum memory available, then it can store an idler
mode entangled with the signal mode and conduct a quantum illumination
strategy. The state we consider is the two-mode squeezed vacuum, with the
reduced state of the signal mode having mean photon number $N_{S}$. This state
has mean vector equal to zero and covariance matrix given by
$
\begin{bmatrix}
\mu & c\\
c & \mu
\end{bmatrix}
\oplus%
\begin{bmatrix}
\mu & -c\\
-c & \mu
\end{bmatrix}
,
$
where $\mu=N_{S}+1/2$ and $c=\sqrt{\mu^{2}-1/4}$. The null hypothesis state
$\rho_{\text{QI}}$ for this setup has mean vector equal to zero and the covariance
matrix
$
\begin{bmatrix}
N_{B}+1/2 & 0\\
0 & \mu
\end{bmatrix}
\oplus%
\begin{bmatrix}
N_{B}+1/2 & 0\\
0 & \mu
\end{bmatrix}
,
$
implying that the return and idler modes are in a product state. The
alternative hypothesis state $\sigma_{\text{QI}}$\ has mean vector equal to
zero and the covariance matrix
$
\begin{bmatrix}
\gamma & \sqrt{\eta}c\\
\sqrt{\eta}c & \mu
\end{bmatrix}
\oplus
\begin{bmatrix}
\gamma & -\sqrt{\eta}c\\
-\sqrt{\eta}c & \mu
\end{bmatrix}
,
$
where $\gamma\equiv\eta N_{S}+N_{B}+1/2$.

\begin{figure*}
\begin{center}
%\subfloat[Parameters: $N_{S}=20$, $\eta=0.01$, $N_{B}=0.01$, and
%$\varepsilon=0.001$.]{
%\begin{overpic}[width=1.0\columnwidth]{example1.pdf}
%  \put(50,1){\footnotesize $M$}
%  \put(3,35){\footnotesize $R$}
%\end{overpic}
%%\includegraphics[width=\columnwidth]{q-illum-plot-2.pdf}
%}
%\subfloat[Parameters: $N_{S}=0.01$, $\eta=0.01$, $N_{B}=20$, and $\varepsilon=0.01$.]{
%\begin{overpic}[width=1.0\columnwidth]{example2.pdf}
%  \put(50,1){\footnotesize $M$}
%  \put(3,35){\footnotesize $R$}
%  \put(38,41){\colorbox{white}{\hspace{3cm}}}
%  \put(38,39){\colorbox{white}{\hspace{3cm}}}
%  \put(38,37){\colorbox{white}{\hspace{3cm}}}
%  \put(38,35){\colorbox{white}{\hspace{3cm}}}
%  \put(38,33){\colorbox{white}{\hspace{3cm}}}
%  \put(38,31){\colorbox{white}{\hspace{3cm}}}
%  \put(38,29){\colorbox{white}{\hspace{4.1cm}}}
%  \put(38,27){\colorbox{white}{\hspace{4.1cm}}}
%  \put(38.8,39.5){\footnotesize coherent}
%  \put(38.8,35.5){\footnotesize coherent ($M \to \infty$)}
%  \put(38.8,31.5){\footnotesize quantum illuminated}
%  \put(38.8,27.5){\footnotesize quantum illuminated ($M \to \infty$)}
%\end{overpic}
\includegraphics[width=\linewidth]{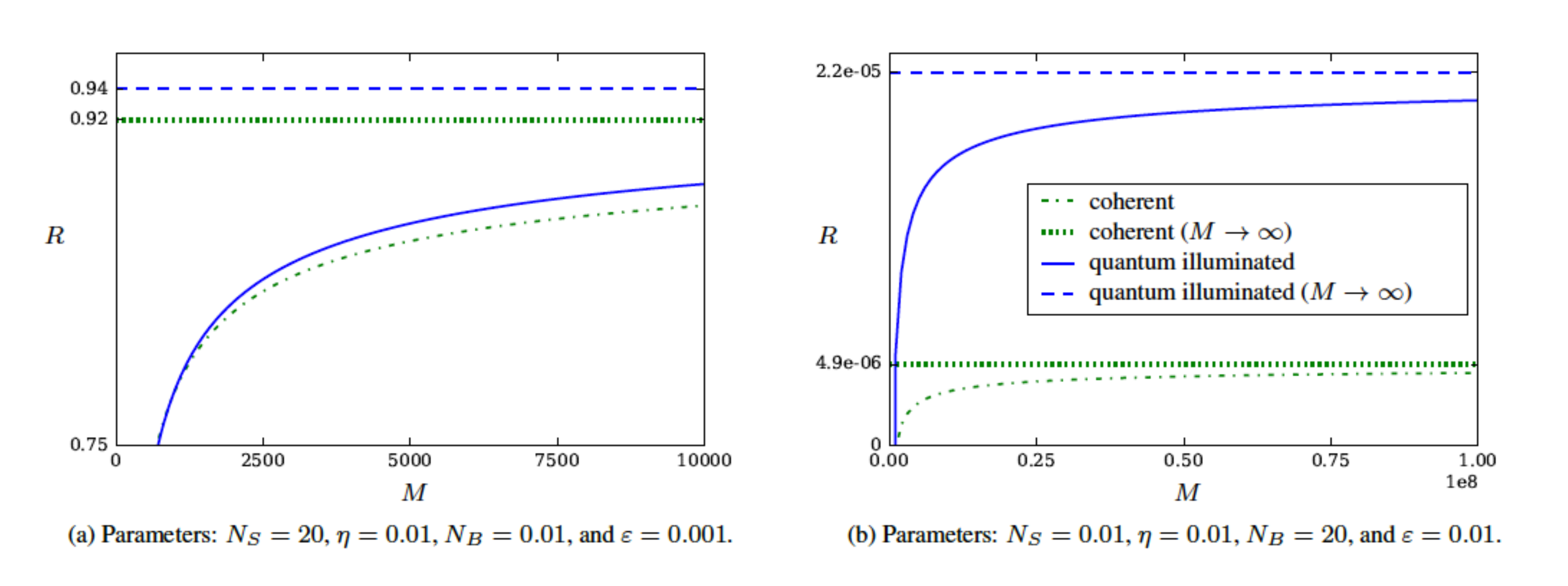}

\end{center}
\caption{Comparison of Type-II\ error probability exponent, $R = -  \ln
\beta /M$, for the quantum illumination transmitter and the
coherent-state transmitter with different parameters.
In both cases, not only does the quantum illumination transmitter achieve
a higher error exponent, but the Gaussian approximation suggests that far
fewer trials are needed to approach this error exponent. The quantum advantage is easier to witness compared to the symmetric-error setting because it occurs for a larger parameter range.
}
\label{fig:comparison}
\end{figure*}

While the expressions for relative entropy and relative entropy variance for the quantum illumination
transmitter are too long to report here, we can evaluate them to first and
second-order in $N_{S}$ (an asymptotic expansion about $N_{S}=\infty$ while keeping $N_B$ fixed), respectively
\begin{align}
D(\rho_{\text{QI}}\Vert\sigma_{\text{QI}})  &  =\frac{\eta N_{S}
}{1-\eta}  \ln
\!\left(1+\frac{1-\eta}{N_{B}}\right)+O(1),\\
V(\rho_{\text{QI}}\Vert\sigma_{\text{QI}})  &  =\left[  \frac{\eta N_{S}
}{1-\eta}  \ln
\!\left(1+\frac{1-\eta}{N_{B}}\right)\right]^2+O(N_{S}).
\end{align}
Alternatively, we can evaluate them to first order in $N_{B}$ (an asymptotic expansion about $N_{B}=\infty$ while keeping $N_S$ fixed):
\begin{align}
D(\rho_{\text{QI}}\Vert\sigma_{\text{QI}})  &  =
\frac{\eta N_{S}(N_{S}+1)
}{N_B}  \ln
\!\left(1+\frac{1}{N_{S}}\right)
+O\!\left(\frac{1}{N_B^2}\right),\\
V(\rho_{\text{QI}}\Vert\sigma_{\text{QI}})  &  =  \frac{\eta N_{S}(N_{S}+1)
(2N_{S}+1)
}{N_B}  \ln^2
\!\left(1+\frac{1}{N_{S}}\right)\nonumber\\
& \qquad\qquad+O\!\left(\frac{1}{N_B^2}\right).
\end{align}
Details about the derivation are in Appendix~\ref{sec:standard-form-cov-matrix}.

%REWRITE: We see that for high $N_{S}$ and any $\eta\in(0,1)$, the quantum
%illumination transmitter substantially outperforms the coherent-state
%transmitter in the first-order term in the Type-II error probability exponent
%(the factor of improvement is $\left[  1-\eta\right]  ^{-1}$). However, for the regime in which $\varepsilon<1/2$ so that
%$\Phi^{-1}(\varepsilon)< 0$,
%the
%second-order term is worse because it is proportional to $N_{S}^{2}$
%rather than $N_{S}$ as in \eqref{eq:coh-state-rel-ent-var}. These findings demonstrate that quantum illumination
%transmitters offer a significant benefit over coherent-state transmitters in
%the setting of many repetitions $M\gg1$, high signal photon number $N_{S}\gg1$, and arbitrary reflectivity $\eta
%\in(0,1)$.
%asymmetric error as considered here

There are several regimes in which the quantum illumination transmitter outperforms the coherent-state transmitter.
We can consider the regime of low background thermal noise, where $N_{S}\gg1$ and $N_{B}\ll1$, and also
%Picking particular
%values $N_{S}=20$, $\eta=0.01$, and $N_{B}=0.01$, a numerical evaluation leads to the values $D(\rho
%_{\text{coh}}\Vert\sigma_{\text{coh}})\approx0.461512$,
%$V(\rho_{\text{coh}}\Vert\sigma_{\text{coh}})\approx2.17253$,
%$D(\rho_{\text{QI}}\Vert\sigma_{\text{QI}})\approx0.941537$, and
%$V(\rho_{\text{QI}}\Vert\sigma_{\text{QI}})\approx5.38013$.
the regime $N_{S}\ll1$ and $N_{B}\gg1$ as considered in
\cite{TEGGLMPS08}.
%Picking particular
%values $N_{S}=0.01$, $\eta=0.01$, and $N_{B}=20$ considered in
%\cite{TEGGLMPS08}, a numerical evaluation leads to the values $D(\rho
%_{\text{coh}}\Vert\sigma_{\text{coh}})\approx2.43951\times10^{-6}$,
%$V(\rho_{\text{coh}}\Vert\sigma_{\text{coh}})\approx4.87998\times10^{-6}$,
%$D(\rho_{\text{QI}}\Vert\sigma_{\text{QI}})\approx7.10588\times10^{-5}$, and
%$V(\rho_{\text{QI}}\Vert\sigma_{\text{QI}})\approx2.06134\times10^{-4}$.
Figures~\ref{fig:comparison}(a) and (b) compare the Type-II\ error probability exponents
of the quantum illumination transmitter and the coherent-state transmitter for
a Type-I error probability $\varepsilon=0.001$ and
$\varepsilon=0.01$, respectively, showing both the first-order
terms and the Gaussian approximations from \eqref{eq:Gaussian-approx}.
%Figure~\ref{fig:comparison}(b) compares the Type~II\ error probability exponents
%of the quantum illumination transmitter and the coherent-state transmitter for
%a Type~I error probability $\varepsilon=0.01$, showing both the first-order
%terms and the Gaussian approximations from \eqref{eq:Gaussian-approx}.
Not only does the quantum illumination transmitter outperform the coherent-state
transmitter in exponent, but the Gaussian
approximation indicates that far fewer trials are required to achieve this
gain. Moreover, when compared to the symmetric-error setting, the quantum advantage is even easier to witness because it occurs for a larger range of parameters.

\textit{Discussion}---We have characterized the Type-II error probability
exponent of hypothesis testing of Gaussian states
in terms of the relative entropy and the relative entropy variance of
two Gaussian states. Our formula for the relative entropy variance should find
applications well beyond the settings considered here, especially to
communication tasks for quantum Gaussian channels. As an application of our
result, we find that not only does a quantum illumination strategy outperform
a coherent-state transmitter with respect to error probability exponent, but
in some cases it requires far fewer trials in order to achieve the optimal
error probability exponent.

\begin{acknowledgments}We are grateful to Nilanjana Datta, Saikat Guha, Stefano Pirandola,
and Kaushik Seshadreesan for discussions
and to Jeffrey H.~Shapiro and Quntao Zhuang for feedback on our manuscript. MT, SL, and MB acknowledge the Hearne Institute for Theoretical Physics at Louisiana State University for hosting them for a research visit during spring of 2016. MB acknowledges funding by the SNSF through a fellowship, funding by the Institute for Quantum Information and Matter (IQIM), an NSF Physics Frontiers Center (NFS Grant PHY-1125565) with support of the Gordon and Betty Moore Foundation (GBMF-12500028), and funding support from the ARO grant for Research on Quantum Algorithms at the IQIM (W911NF-12-1-0521).
SL acknowledges ARO, AFOSR, and IARPA. MT is funded by an ARC Discovery Early Career Researcher Award fellowship (Grant No. DE160100821). MMW acknowledges the NSF
under Award No.~CCF-1350397.
\end{acknowledgments}

\bibliographystyle{unsrt}
\bibliography{Ref}

\begin{thebibliography}{10}

\bibitem{LR08}
Erich~L. Lehmann and Joseph~P. Romano.
\newblock {\em Testing Statistical Hypotheses}.
\newblock Springer Texts in Statistics. Springer, third edition, August 2008.

\bibitem{B74}
Richard Blahut.
\newblock Hypothesis testing and information theory.
\newblock {\em IEEE Transactions on Information Theory}, 20(4):405--417, July
  1974.

\bibitem{B60}
Raghu~R. Bahadur.
\newblock On the asymptotic efficiency of tests and estimates.
\newblock {\em Sankhya: The Indian Journal of Statistics (1933-1960)},
  22(3/4):229--252, 1960.

\bibitem{chernoff1952}
Herman Chernoff.
\newblock A measure of asymptotic efficiency for tests of a hypothesis based on
  the sum of observations.
\newblock {\em The Annals of Mathematical Statistics}, 23(4):493--507, December
  1952.

\bibitem{PhysRevLett.98.160501}
K.~M.~R. Audenaert, J.~Calsamiglia, R.~Mu{\~{n}}oz Tapia, E.~Bagan, Ll.
  Masanes, A.~Acin, and F.~Verstraete.
\newblock Discriminating states: The quantum {Chernoff} bound.
\newblock {\em Physical Review Letters}, 98(16):160501, April 2007.
\newblock arXiv:quant-ph/0610027.

\bibitem{ANSV08}
K.~M.~R. Audenaert, M.~Nussbaum, A.~Szkola, and F.~Verstraete.
\newblock Asymptotic error rates in quantum hypothesis testing.
\newblock {\em Communications in Mathematical Physics}, 279(1):251--283, April
  2008.
\newblock arXiv:0708.4282.

\bibitem{HP91}
Fumio Hiai and D\'enes Petz.
\newblock The proper formula for relative entropy and its asymptotics in
  quantum probability.
\newblock {\em Communications in Mathematical Physics}, 143(1):99--114,
  December 1991.

\bibitem{ON00}
Tomohiro Ogawa and Hiroshi Nagaoka.
\newblock Strong converse and {Stein's} lemma in quantum hypothesis testing.
\newblock {\em IEEE Transactions on Information Theory}, 46(7):2428--2433,
  November 2000.
\newblock arXiv:quant-ph/9906090.

\bibitem{li12}
Ke~Li.
\newblock Second order asymptotics for quantum hypothesis testing.
\newblock {\em Annals of Statistics}, 42(1):171--189, February 2014.
\newblock arXiv:1208.1400.

\bibitem{TH12}
Marco Tomamichel and Masahito Hayashi.
\newblock A hierarchy of information quantities for finite block length
  analysis of quantum tasks.
\newblock {\em IEEE Transactions on Information Theory}, 59(11):7693--7710,
  November 2013.
\newblock arXiv:1208.1478.

\bibitem{CMMAB08}
John Calsamiglia, Ramon Mu\~noz Tapia, Lluis Masanes, Antonio Acin, and Emilio
  Bagan.
\newblock Quantum {Chernoff} bound as a measure of distinguishability between
  density matrices: Application to qubit and {Gaussian} states.
\newblock {\em Physical Review A}, 77(3):032311, March 2008.
\newblock arXiv:0708.2343.

\bibitem{JOPS12}
V.~Jaksic, Y.~Ogata, C.-A. Pillet, and R.~Seiringer.
\newblock Quantum hypothesis testing and non-equilibrium statistical mechanics.
\newblock {\em Reviews in Mathematical Physics}, 24(06):1230002, 2012.
\newblock arXiv:1109.3804.

\bibitem{DPR15}
Nilanjana Datta, Yan Pautrat, and Cambyse Rouz\'{e}.
\newblock Second-order asymptotics for quantum hypothesis testing in settings
  beyond i.i.d. - quantum lattice systems and more.
\newblock {\em Journal of Mathematical Physics}, 57(6):062207, June 2016.
\newblock arXiv:1510.04682.

\bibitem{L08}
Seth Lloyd.
\newblock Enhanced sensitivity of photodetection via quantum illumination.
\newblock {\em Science}, 321(5895):1463--1465, September 2008.
\newblock arXiv:0803.2022.

\bibitem{TEGGLMPS08}
Si-Hui Tan, Baris~I. Erkmen, Vittorio Giovannetti, Saikat Guha, Seth Lloyd,
  Lorenzo Maccone, Stefano Pirandola, and Jeffrey~H. Shapiro.
\newblock Quantum illumination with {Gaussian} states.
\newblock {\em Physical Review Letters}, 101(25):253601, December 2008.
\newblock arXiv:0810.0534.

\bibitem{S09}
Jeffrey~H. Shapiro.
\newblock Defeating passive eavesdropping with quantum illumination.
\newblock {\em Physical Review A}, 80(2):022320, August 2009.
\newblock arXiv:0904.2490.

\bibitem{LRDOBG13}
E.~D. Lopaeva, I.~Ruo~Berchera, I.~P. Degiovanni, S.~Olivares, G.~Brida, and
  M.~Genovese.
\newblock Experimental realization of quantum illumination.
\newblock {\em Physical Review Letters}, 110(15):153603, April 2013.
\newblock arXiv:1303.4304.

\bibitem{ZTZWS13}
Zheshen Zhang, Maria Tengner, Tian Zhong, Franco N.~C. Wong, and Jeffrey~H.
  Shapiro.
\newblock Entanglement's benefit survives an entanglement-breaking channel.
\newblock {\em Physical Review Letters}, 111(1):010501, July 2013.
\newblock arXiv:1303.5343.

\bibitem{ZMWS15}
Zheshen Zhang, Sara Mouradian, Franco N.~C. Wong, and Jeffrey~H. Shapiro.
\newblock Entanglement-enhanced sensing in a lossy and noisy environment.
\newblock {\em Physical Review Letters}, 114(11):110506, March 2015.
\newblock arXiv:1411.5969.

\bibitem{BGWVSP15}
Shabir Barzanjeh, Saikat Guha, Christian Weedbrook, David Vitali, Jeffrey~H.
  Shapiro, and Stefano Pirandola.
\newblock Microwave quantum illumination.
\newblock {\em Physical Review Letters}, 114(8):080503, February 2015.
\newblock arXiv:1410.4008.

\bibitem{WTB16}
Mark~M. Wilde, Marco Tomamichel, and Mario Berta.
\newblock Converse bounds for private communication over quantum channels.
\newblock {\em IEEE Transactions on Information Theory}, 63(3):1792--1817,
  March 2017.
\newblock arXiv:1602.08898.

\bibitem{KW17}
Eneet Kaur and Mark~M. Wilde.
\newblock Upper bounds on secret key agreement over lossy thermal bosonic
  channels.
\newblock June 2017.
\newblock arXiv:1706.04590.

\bibitem{TT13}
Marco Tomamichel and Vincent Y.~F. Tan.
\newblock Second-order asymptotics for the classical capacity of image-additive
  quantum channels.
\newblock {\em Communications in Mathematical Physics}, 338(1):103--137, August
  2015.
\newblock arXiv:1308.6503.

\bibitem{WRG16}
Mark~M. Wilde, Joseph~M. Renes, and Saikat Guha.
\newblock Second-order coding rates for pure-loss bosonic channels.
\newblock {\em Quantum Information Processing}, 15(3):1289--1308, March 2016.
\newblock arXiv:1408.5328.

\bibitem{DTW14}
Nilanjana Datta, Marco Tomamichel, and Mark~M. Wilde.
\newblock On the second-order asymptotics for entanglement-assisted
  communication.
\newblock {\em Quantum Information Processing}, 15(6):2569--2591, June 2016.
\newblock arXiv:1405.1797.

\bibitem{BDL15}
Salman Beigi, Nilanjana Datta, and Felix Leditzky.
\newblock Decoding quantum information via the {Petz} recovery map.
\newblock {\em Journal of Mathematical Physics}, 57(8):082203, August 2016.
\newblock arXiv:1504.04449.

\bibitem{TBR15}
Marco Tomamichel, Mario Berta, and Joseph~M. Renes.
\newblock Quantum coding with finite resources.
\newblock {\em Nature Communications}, 7:11419, May 2016.
\newblock arXiv:1504.04617.

\bibitem{U62}
Hisaharu Umegaki.
\newblock Conditional expectations in an operator algebra {IV} (entropy and
  information).
\newblock {\em Kodai Mathematical Seminar Reports}, 14(2):59--85, 1962.

\bibitem{datta_private}
Nilanjana Datta.
\newblock Private communication.
\newblock 2017.

\bibitem{PhysRevA.71.062320}
Xiao-yu Chen.
\newblock Gaussian relative entropy of entanglement.
\newblock {\em Physical Review A}, 71(6):062320, June 2005.
\newblock arXiv:quant-ph/0402109.

\bibitem{PLOB15}
Stefano Pirandola, Riccardo Laurenza, Carlo Ottaviani, and Leonardo Banchi.
\newblock Fundamental limits of repeaterless quantum communications.
\newblock October 2015.
\newblock arXiv:1510.08863v5.

\bibitem{Arvind1995}
Arvind, B.~Dutta, N.~Mukunda, and R.~Simon.
\newblock The real symplectic groups in quantum mechanics and optics.
\newblock {\em Pramana}, 45(6):471--497, December 1995.
\newblock arXiv:quant-ph/9509002.

\bibitem{adesso14}
Gerardo Adesso, Sammy Ragy, and Antony~R. Lee.
\newblock Continuous variable quantum information: {Gaussian} states and
  beyond.
\newblock {\em Open Systems and Information Dynamics}, 21(01--02):1440001, June
  2014.
\newblock arXiv:1401.4679.

\bibitem{SB14}
Gaetana Spedalieri and Samuel~L. Braunstein.
\newblock Asymmetric quantum hypothesis testing with {Gaussian} states.
\newblock {\em Physical Review A}, 90(5):052307, November 2014.
\newblock arXiv:1407.0884.

\bibitem{PhysRevA.76.062301}
Masahito Hayashi.
\newblock Error exponent in asymmetric quantum hypothesis testing and its
  application to classical-quantum channel coding.
\newblock {\em Physical Review A}, 76(6):062301, December 2007.
\newblock arXiv:quant-ph/0611013.

\bibitem{N06}
Hiroshi Nagaoka.
\newblock The converse part of the theorem for quantum {Hoeffding} bound.
\newblock November 2006.
\newblock arXiv:quant-ph/0611289.

\bibitem{0305-4470-35-50-307}
Masahito Hayashi.
\newblock Optimal sequence of quantum measurements in the sense of {S}tein's
  lemma in quantum hypothesis testing.
\newblock {\em Journal of Physics A: Mathematical and General}, 35(50):10759,
  2002.

\bibitem{W36}
John Williamson.
\newblock On the algebraic problem concerning the normal forms of linear
  dynamical systems.
\newblock {\em American Journal of Mathematics}, 58(1):141--163, January 1936.

\bibitem{K06}
Ole Krueger.
\newblock {\em Quantum Information Theory with {Gaussian} Systems}.
\newblock PhD thesis, Technische Universit\"at Braunschweig, April 2006.
\newblock Available at
  \url{https://publikationsserver.tu-braunschweig.de/receive/dbbs_mods_00020741}.

\bibitem{H10}
Alexander~S. Holevo.
\newblock The entropy gain of infinite-dimensional quantum channels.
\newblock {\em Doklady Mathematics}, 82(2):730--731, October 2010.
\newblock arXiv:1003.5765.

\bibitem{H13book}
Alexander~S. Holevo.
\newblock {\em Quantum systems, channels, information: a mathematical
  introduction}, volume~16.
\newblock Walter de Gruyter, 2012.

\bibitem{Note1}
In this way, we see that the expression is well defined when $\rho $ does not
  have full support.

\bibitem{PhysRevD.2.2161}
Girish~S. Agarwal and Emil Wolf.
\newblock Calculus for functions of noncommuting operators and general
  phase-space methods in quantum mechanics. {I. Mapping} theorems and ordering
  of functions of noncommuting operators.
\newblock {\em Physical Review D}, 2(10):2161--2186, November 1970.

\bibitem{Isserlis18}
Leon Isserlis.
\newblock On a formula for the product-moment coefficient of any order of a
  normal frequency distribution in any number of variables.
\newblock {\em Biometrika}, 12(1/2):134--139, 1918.

\bibitem{Note2}
We take this convention, as in \cite {TEGGLMPS08}, because one would not expect
  the amount of thermal noise in the return signal to change depending on
  whether the object is present.

\bibitem{DGCZ00}
Lu-Ming Duan, G.~Giedke, J.~I. Cirac, and P.~Zoller.
\newblock Inseparability criterion for continuous variable systems.
\newblock {\em Physical Review Letters}, 84(12):2722--2725, March 2000.
\newblock arXiv:quant-ph/9908056.

\bibitem{S00}
R.~Simon.
\newblock {Peres-Horodecki} separability criterion for continuous variable
  systems.
\newblock {\em Physical Review Letters}, 84(12):2726--2729, March 2000.
\newblock arXiv:quant-ph/9909044.

\end{thebibliography}

\pagebreak

\appendix\onecolumngrid

\section{Diagonalizability and symplectic decompositions}

\label{sec:matrix-diagonalizable}Here we show that the matrix $iV^{\rho}
\Omega$ is diagonalizable, which allows for determining the symplectic
eigenvalues and symplectic matrix for any covariance matrix $V^{\rho}$. In
turn, this allows for evaluating the matrix function $\operatorname{arcoth}
(2iV^{\rho}\Omega)$. Consider that
\begin{align}
iV^{\rho}\Omega &  =iS^{\rho}\left(  D^{\rho}\oplus D^{\rho}\right)  \left(
S^{\rho}\right)  ^{T}\Omega=iS^{\rho}\left(  D^{\rho}\oplus D^{\rho}\right)
\Omega\left(  S^{\rho}\right)  ^{-1}\\
&  =S^{\rho}\left(  I_{2}\otimes D^{\rho}\right)  \left(  -\sigma_{Y}\otimes
I_{n}\right)  \left(  S^{\rho}\right)  ^{-1}=S^{\rho}\left(  -\sigma
_{Y}\otimes D^{\rho}\right)  \left(  S^{\rho}\right)  ^{-1},
\end{align}
where in the second equality we used that $S^{T}\Omega S=\Omega$ (implying
$S^{T}\Omega=\Omega S^{-1}$) and in the next that \eqref{eq:symplectic-form}
implies $i\Omega=-\sigma_{Y}\otimes I_{n}$, where $\sigma_{Y}$ is a Pauli
matrix. Since $-\sigma_{Y}$ is diagonalizable as $-\sigma_{Y}=U\left(
-\sigma_{Z}\right)  U^{\dag}$, where
\begin{equation}
U\equiv\frac{1}{\sqrt{2}}
\begin{bmatrix}
1 & 1\\
i & -i
\end{bmatrix}
,
\end{equation}
this implies that we can write
\begin{equation}
iV^{\rho}\Omega=S^{\rho}\left(  U\otimes I_{n}\right)  \left(  -\sigma
_{Z}\otimes D^{\rho}\right)  \left(  U^{\dag}\otimes I_{n}\right)  \left(
S^{\rho}\right)  ^{-1}=S^{\rho}\left(  U\otimes I_{n}\right)  \left(
[-D^{\rho}]\oplus D^{\rho}\right)  \left(  U^{\dag}\otimes I_{n}\right)  \left(
S^{\rho}\right)  ^{-1}.\label{eq:spectral-decompose-iVOm}
\end{equation}

From this last equality, we see that the symplectic decomposition of $V^{\rho
}$ can be computed from the ordinary eigendecomposition of $iV^{\rho}\Omega$.
The symplectic eigenvalues are the entries along the diagonal matrix
$-D^{\rho}\oplus D^{\rho}$ and the symplectic matrix $S^{\rho}$ can be
computed from the matrix $S^{\rho}\left(  U\otimes I_{n}\right)  $ of
eigenvectors of $iV^{\rho}\Omega$ after right-multiplying by $U^{\dag}\otimes
I_{n}$.

At the same time, we see from \eqref{eq:spectral-decompose-iVOm} that the matrix function $\operatorname{arcoth}
(2iV^{\rho}\Omega)$ can be evaluated as
\begin{align}
\operatorname{arcoth}(2iV^{\rho}\Omega)  &  =S^{\rho}\left(  U\otimes
I_{n}\right)  \operatorname{arcoth}\left[  2\left(  [-D^{\rho}]\oplus D^{\rho
}\right)  \right]  \left(  U^{\dag}\otimes I_{n}\right)  \left(  S^{\rho
}\right)  ^{-1}\\
&  =S^{\rho}\left(  U\otimes I_{n}\right)  \left[  \operatorname{arcoth}
(-2D^{\rho})\oplus\operatorname{arcoth}(2D^{\rho})\right]  \left(  U^{\dag
}\otimes I_{n}\right)  \left(  S^{\rho}\right)  ^{-1}\\
&  =S^{\rho}\left(  U\otimes I_{n}\right)  \left[  -\operatorname{arcoth}
(2D^{\rho})\oplus\operatorname{arcoth}(2D^{\rho})\right]  \left(  U^{\dag
}\otimes I_{n}\right)  \left(  S^{\rho}\right)  ^{-1}\\
&  =S^{\rho}\left(  U\otimes I_{n}\right)  \left(  -\sigma_{Z}\otimes
\operatorname{arcoth}\left[  2D^{\rho}\right]  \right)  \left(  U^{\dag
}\otimes I_{n}\right)  \left(  S^{\rho}\right)  ^{-1}\\
&  =S^{\rho}\left(  -\sigma_{Y}\otimes\operatorname{arcoth}\left[  2D^{\rho
}\right]  \right)  \left(  S^{\rho}\right)  ^{-1}\\
&  =S^{\rho}\left(  I_{2}\otimes\operatorname{arcoth}\left[  2D^{\rho}\right]
\right)  \left(  -\sigma_{Y}\otimes I_{n}\right)  \left(  S^{\rho}\right)
^{-1}\\
&  =S^{\rho}\left(  I_{2}\otimes\operatorname{arcoth}\left[  2D^{\rho}\right]
\right)  i\Omega\left(  S^{\rho}\right)  ^{-1},
\end{align}
and so we find that
\begin{align}
G_{\rho}  &  =2i\Omega\operatorname{arcoth}(2iV^{\rho}\Omega)=2i\Omega
S^{\rho}\left(  I_{2}\otimes\operatorname{arcoth}\left[  2D^{\rho}\right]
\right)  i\Omega\left(  S^{\rho}\right)  ^{-1}=-2\Omega S^{\rho}\left(
I_{2}\otimes\operatorname{arcoth}\left[  2D^{\rho}\right]  \right)  \left(
S^{\rho}\right)  ^{T}\Omega\\
&  =-2\Omega S^{\rho}\left[  \operatorname{arcoth}(2D^{\rho})\right]
^{\oplus2}\left(  S^{\rho}\right)  ^{T}\Omega.
\end{align}

\section{Calculation of \eqref{eq:2nd-moment-term} for zero-mean quantum
Gaussian states}

\label{sec:details-zero-mean}This appendix evaluates the expression
$\left\langle \left(  \hat{x}^{T}\Gamma\hat{x}\right)  ^{2}\right\rangle
_{\rho}$ from \eqref{eq:rel-ent-var-1}\ in the main text. Consider that
\begin{align}
\left\langle \left(  \hat{x}^{T}\Gamma\hat{x}\right)  ^{2}\right\rangle
_{\rho} &  =\left\langle \hat{x}^{T}\Gamma\hat{x}\hat{x}^{T}\Gamma\hat
{x}\right\rangle _{\rho}\label{eq:rel-ent-var-4}\\
&  =\sum_{k,l,m,n}\Gamma_{k,l}\Gamma_{m,n}\left\langle \hat{x}_{k}\hat{x}
_{l}\hat{x}_{m}\hat{x}_{n}\right\rangle _{\rho}.
\end{align}
We need to do some manipulations of the expression $\left\langle \hat{x}
_{k}\hat{x}_{l}\hat{x}_{m}\hat{x}_{n}\right\rangle _{\rho}$ in order to get a
sum of all permutations of the operators (this is known as the Weyl symmetric ordering
\cite{PhysRevD.2.2161}---it is necessary for us to use Weyl symmetric ordering because we have defined a Gaussian state to be one
with a Gaussian Wigner representation for which Weyl symmetric ordering is
required, and only under the guarantee of Gaussian statistics can
Isserlis' theorem \cite{Isserlis18} be applied in order to evaluate higher moments). Consider that we can use commutators and
anticommutators to help with this task, so that we can write
\begin{align}
%\left\langle \hat{x}_{k}\hat{x}_{l}\hat{x}_{m}\hat{x}_{n}\right\rangle _{\rho}
%&  =\left\langle \hat{x}_{k}\hat{x}_{m}\hat{x}_{l}\hat{x}_{n}\right\rangle
%_{\rho}+i\Omega_{l,m}\left\langle \hat{x}_{k}\hat{x}_{n}\right\rangle _{\rho
%},\\
\left\langle \hat{x}_{k}\hat{x}_{l}\hat{x}_{m}\hat{x}_{n}\right\rangle _{\rho}
&  =\left\langle \hat{x}_{k}\hat{x}_{m}\hat{x}_{l}\hat{x}_{n}\right\rangle
_{\rho}+i\Omega_{l,m}\left\langle \hat{x}_{k}\hat{x}_{n}\right\rangle \\
&  =\left\langle \hat{x}_{k}\hat{x}_{m}\hat{x}_{n}\hat{x}_{l}\right\rangle
_{\rho}+i\Omega_{l,n}\left\langle \hat{x}_{k}\hat{x}_{m}\right\rangle
+i\Omega_{l,m}\left\langle \hat{x}_{k}\hat{x}_{n}\right\rangle.
\end{align}
Adopting the shorthand $\left\langle klmn\right\rangle \equiv\left\langle
\hat{x}_{k}\hat{x}_{l}\hat{x}_{m}\hat{x}_{n}\right\rangle _{\rho}$ and
$\left\langle kn\right\rangle \equiv\left\langle \hat{x}_{k}\hat{x}
_{n}\right\rangle _{\rho}$, we can then write
\begin{align}
&  \sum_{k,l,m,n}\Gamma_{k,l}\Gamma_{m,n}\left\langle \hat{x}_{k}\hat{x}
_{l}\hat{x}_{m}\hat{x}_{n}\right\rangle _{\rho}\nonumber\\
&  =\frac{1}{3}\sum_{k,l,m,n}\Gamma_{k,l}\Gamma_{m,n}\left[  \left\langle
klmn\right\rangle +\left\langle kmln\right\rangle +\left\langle
kmnl\right\rangle +i\left(  2\Omega_{l,m}\left\langle kn\right\rangle
+\Omega_{l,n}\left\langle km\right\rangle \right)  \right]  \\
&  =\frac{1}{3}\sum_{k,l,m,n}\Gamma_{k,l}\Gamma_{m,n}\left[  \left\langle
klmn\right\rangle +\left\langle kmln\right\rangle +\left\langle
kmnl\right\rangle \right]  +\frac{1}{3}\sum_{k,l,m,n}\Gamma_{k,l}\Gamma
_{m,n}i\left(  2\Omega_{l,m}\left\langle kn\right\rangle +\Omega
_{l,n}\left\langle km\right\rangle \right)  .
\end{align}
We handle these terms one at a time. For the first term consider that the
quantity $\Gamma_{k,l}\Gamma_{m,n}$\ is invariant under the swaps
$k\leftrightarrow l$, $m\leftrightarrow n$, and $\left(  k,l\right)
\leftrightarrow\left(  m,n\right)  $, due to the fact that $\Gamma$ is a
symmetric matrix (note that these swaps realize the dihedral subgroup of
the symmetric group $S_{4}$). Under these various swaps and invariances, the
quantities $\left\langle klmn\right\rangle $, $\left\langle kmln\right\rangle
$, and $\left\langle kmnl\right\rangle $ can realize all 24 permutations of
the letters $klmn$, which implies that
\begin{equation}
\frac{1}{3}\sum_{k,l,m,n}\Gamma_{k,l}\Gamma_{m,n}\left[  \left\langle
klmn\right\rangle +\left\langle kmln\right\rangle +\left\langle
kmnl\right\rangle \right]  =\sum_{k,l,m,n}\Gamma_{k,l}\Gamma_{m,n}\left\langle
\left\{  \hat{x}_{k}\hat{x}_{l}\hat{x}_{m}\hat{x}_{n}\right\}  _{W}
\right\rangle _{\rho},
\end{equation}
where $\left\{  \hat{x}_{k}\hat{x}_{l}\hat{x}_{m}\hat{x}_{n}\right\}  _{W}$
denotes the Weyl symmetric ordering. So now we can employ the fact that $\rho$
is a Gaussian state and a well known formula for the higher moments of
Gaussians (Isserlis' theorem \cite{Isserlis18}) to conclude that
\begin{align}
\sum_{k,l,m,n}\Gamma_{k,l}\Gamma_{m,n}\left\langle \left\{  \hat{x}_{k}\hat
{x}_{l}\hat{x}_{m}\hat{x}_{n}\right\}  _{W}\right\rangle _{\rho} &
=\sum_{k,l,m,n}\Gamma_{k,l}\Gamma_{m,n}\left(  V_{k,l}^{\rho}V_{m,n}^{\rho
}+V_{k,m}^{\rho}V_{l,n}^{\rho}+V_{k,n}^{\rho}V_{l,m}^{\rho}\right)  \\
&  =\left[  \operatorname{Tr}\{\Gamma V^{\rho}\}\right]  ^{2}
+2\operatorname{Tr}\{\Gamma V^{\rho}\Gamma V^{\rho}\}.\label{eq:rel-ent-var-2}
\end{align}

We simplify the other term:%
\begin{equation}
\frac{1}{3}\sum_{k,l,m,n}\Gamma_{k,l}\Gamma_{m,n}i\left(  2\Omega
_{l,m}\left\langle kn\right\rangle +\Omega_{l,n}\left\langle km\right\rangle
\right)  =\frac{2i}{3}\sum_{k,l,m,n}\Gamma_{k,l}\Gamma_{m,n}\Omega
_{l,m}\left\langle kn\right\rangle +\frac{i}{3}\sum_{k,l,m,n}\Gamma
_{k,l}\Gamma_{m,n}\Omega_{l,n}\left\langle km\right\rangle .
\end{equation}
Consider that%
\begin{equation}
\sum_{k,l,m,n}\Gamma_{k,l}\Gamma_{m,n}\Omega_{l,m}\left\langle kn\right\rangle
=\sum_{k,l,m,n}\Gamma_{k,l}\Gamma_{m,n}\Omega_{l,n}\left\langle
km\right\rangle ,
\end{equation}
due to invariance of the quantity $\Gamma_{k,l}\Gamma_{m,n}$ under the swap
$n\leftrightarrow m$, reducing the overall sum to%
\begin{align}
i\sum_{k,l,m,n}\Gamma_{k,l}\Gamma_{m,n}\Omega_{l,m}\left\langle
kn\right\rangle  &  =\frac{i}{2}\sum_{k,l,m,n}\Gamma_{k,l}\Gamma_{m,n}\left(
\Omega_{l,m}\left\langle kn\right\rangle +\Omega_{m,l}\left\langle
nk\right\rangle \right) \\
&  =\frac{i}{2}\sum_{k,l,m,n}\Gamma_{k,l}\Gamma_{m,n}\left(  \Omega
_{l,m}\left\langle kn\right\rangle -\Omega_{l,m}\left\langle nk\right\rangle
\right) \\
&  =-\frac{1}{2}\sum_{k,l,m,n}\Gamma_{k,l}\Gamma_{m,n}\Omega_{l,m}\Omega
_{k,n}\\
&  =\frac{1}{2}\sum_{k,l,m,n}\Gamma_{k,l}\Gamma_{m,n}\Omega_{l,m}\Omega
_{n,k}=\frac{1}{2}\operatorname{Tr}\{\Gamma\Omega\Gamma\Omega\}.
\label{eq:rel-ent-var-3}%
\end{align}
Putting together\ \eqref{eq:rel-ent-var-1}, \eqref{eq:rel-ent-var-4},
\eqref{eq:rel-ent-var-2}, and \eqref{eq:rel-ent-var-3}, we find that%
\begin{align}
V(\rho\Vert\sigma)  &  =\frac{1}{4}\left[  \operatorname{Tr}\{\Gamma V^{\rho
}\}\right]  ^{2}+\frac{1}{2}\operatorname{Tr}\{\Gamma V^{\rho}\Gamma V^{\rho
}\}+\frac{1}{8}\operatorname{Tr}\{\Gamma\Omega\Gamma\Omega\}-\frac{1}%
{4}\left[  \operatorname{Tr}\{\Gamma V^{\rho}\}\right]  ^{2}\\
&  =\frac{1}{2}\operatorname{Tr}\{\Gamma V^{\rho}\Gamma V^{\rho}\}+\frac{1}%
{8}\operatorname{Tr}\{\Gamma\Omega\Gamma\Omega\},
\label{eq:rel-ent-var-zero-mean}%
\end{align}
concluding the proof for zero-mean states.

\section{Relative entropy variance formula for arbitrary Gaussian states}

\label{sec:non-zero-mean-states}

In this appendix we show Theorem~\ref{th:main}, restated here for the reader's convenience.
We compute the relative entropy variance
formula for arbitrary Gaussian states (those that do not necessarily have zero
mean).
\newtheorem*{thmain}{Theorem~\ref{th:main} (restated)}

\begin{thmain}
For Gaussian states $\rho$ and $\sigma$, the relative entropy variance from
\eqref{eq:relative-entropy-variance}\ is given by
\begin{equation}
V(\rho\Vert\sigma)=\frac{\operatorname{Tr}\{(\Gamma V^{\rho})^2\}}{2}
+\frac{\operatorname{Tr}\{(\Gamma\Omega)^2\}}{8}
+\gamma^{T}G_{\sigma}V^{\rho}G_{\sigma}\gamma, %\label{eq:rel-ent-var-Gaussian}
\end{equation}
where $\Gamma\equiv G_{\rho}-G_{\sigma}$, $G_{\rho}$ and $G_{\sigma}$ are
defined from \eqref{eq:G_rho}, $\Omega$ is defined in
\eqref{eq:symplectic-form}, $V^{\rho}$ is defined in
\eqref{eq:covariance-matrices}, and $\gamma\equiv\mu^{\rho}-\mu^{\sigma}$.
\end{thmain}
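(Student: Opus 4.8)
The plan is to reduce the general statement to the zero-mean case already proved in Appendix~\ref{sec:details-zero-mean}, by shifting the quadratures so that they are centered in the state $\rho$. First I would expand $\ln\rho-\ln\sigma$ using the exponential forms \eqref{eq:exp-form}, giving
\begin{equation}
\ln\rho-\ln\sigma=\tfrac12\ln(Z_\sigma/Z_\rho)-\tfrac12(\hat x-\mu^\rho)^T G_\rho(\hat x-\mu^\rho)+\tfrac12(\hat x-\mu^\sigma)^T G_\sigma(\hat x-\mu^\sigma).
\end{equation}
Setting $\hat y\equiv\hat x-\mu^\rho$—which has zero mean in $\rho$, obeys the same commutation relations $[\hat y_j,\hat y_k]=i\Omega_{j,k}$ since $\mu^\rho$ is a c-number, and has covariance matrix $V^\rho$—and writing $\hat x-\mu^\sigma=\hat y+\gamma$, the symmetry of $G_\sigma$ lets me collapse the right-hand side to
\begin{equation}
\ln\rho-\ln\sigma=c-\tfrac12\,\hat y^T\Gamma\hat y+\gamma^T G_\sigma\hat y,\qquad\Gamma=G_\rho-G_\sigma,
\end{equation}
with $c$ a c-number that does not affect the variance.

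Next I would compute $V(\rho\Vert\sigma)=\mathrm{Var}_\rho[L]$ for $L\equiv-\tfrac12\hat y^T\Gamma\hat y+\gamma^T G_\sigma\hat y$ by splitting $L=A+B$ with $A\equiv-\tfrac12\hat y^T\Gamma\hat y$ and $B\equiv\gamma^T G_\sigma\hat y=w^T\hat y$, $w\equiv G_\sigma\gamma$, so that $\mathrm{Var}_\rho[L]=\mathrm{Var}_\rho[A]+\mathrm{Var}_\rho[B]+\big(\langle AB+BA\rangle_\rho-2\langle A\rangle_\rho\langle B\rangle_\rho\big)$. The purely quadratic term gives $\mathrm{Var}_\rho[A]=\tfrac14\,\mathrm{Var}_\rho[\hat y^T\Gamma\hat y]$, which is exactly the zero-mean relative entropy variance; applying the computation of Appendix~\ref{sec:details-zero-mean} verbatim with $\hat y$ in place of $\hat x$ (legitimate since $\hat y$ has the same commutation relations, zero mean, and covariance $V^\rho$) yields $\tfrac12\operatorname{Tr}\{(\Gamma V^\rho)^2\}+\tfrac18\operatorname{Tr}\{(\Gamma\Omega)^2\}$. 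For the linear term, $\langle B\rangle_\rho=0$ because $\hat y$ is centered in $\rho$, and $\langle B^2\rangle_\rho=\sum_{j,k}w_jw_k\langle\hat y_j\hat y_k\rangle_\rho=w^TV^\rho w$ since the antisymmetric piece $w^T\Omega w$ vanishes; hence $\mathrm{Var}_\rho[B]=w^TV^\rho w=\gamma^T G_\sigma V^\rho G_\sigma\gamma$, which is exactly the third term of \eqref{eq:rel-ent-var-Gaussian}.

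Finally I would show that the cross term vanishes, the only genuinely new ingredient. Since $\langle B\rangle_\rho=0$ the cross term equals $\langle AB+BA\rangle_\rho$, and $AB+BA$ is a linear combination of triple products $\hat y_j\hat y_k\hat y_l$ of centered quadratures. A zero-mean Gaussian state is invariant under the parity map $\hat y\mapsto-\hat y$ (its exponential form involves only the quadratic $\hat y^T G_\rho\hat y$), whereas $A$ is parity-even and $B$ parity-odd, so $AB+BA$ is parity-odd and its expectation is zero; equivalently, reordering each $\hat y_j\hat y_k\hat y_l$ into Weyl-symmetric form produces a third Weyl moment, which vanishes as an odd moment of a centered Gaussian by Isserlis, plus c-number multiples of single quadratures, whose expectations also vanish. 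Combining the three contributions gives \eqref{eq:rel-ent-var-Gaussian}. The hard part of the overall derivation—the quartic moment $\langle(\hat y^T\Gamma\hat y)^2\rangle_\rho$—has already been handled in Appendix~\ref{sec:details-zero-mean}; what remains here is essentially careful bookkeeping of the shift by $\mu^\rho$ together with the parity argument that kills the linear–quadratic cross term.
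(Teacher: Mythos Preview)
Your proposal is correct and follows essentially the same route as the paper: shift to centered quadratures $\hat y=\hat x-\mu^\rho$, split $\ln\rho-\ln\sigma$ into a quadratic piece (handled by the zero-mean computation of Appendix~\ref{sec:details-zero-mean}) and a linear piece $\gamma^T G_\sigma\hat y$, and show the cross term vanishes because odd moments of centered Gaussian quadratures are zero. The paper carries out the vanishing of the cross term by explicitly commuting $\langle\hat x_{c,k}\hat x_{c,l}\hat x_{c,m}\rangle_\rho$ into Weyl order and invoking Isserlis, which is exactly your second justification; your parity argument is a slightly slicker but equivalent way to see the same thing.
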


\begin{proof}
 Here we can see this as a shift of the zero-mean case. Let us define
the centered quadrature vector of operators $\hat{x}_{c}\equiv\hat{x}%
-\mu^{\rho}$ and the difference-of-means vector $\gamma\equiv\mu^{\rho}%
-\mu^{\sigma}$, and then we see that%
\begin{align}
(\hat{x}-\mu^{\sigma})^{T}G_{\sigma}(\hat{x}-\mu^{\sigma}) &  =(\hat{x}%
-\mu^{\rho}+\gamma)^{T}G_{\sigma}(\hat{x}-\mu^{\rho}+\gamma)\\
&  =(\hat{x}_{c}+\gamma)^{T}G_{\sigma}(\hat{x}_{c}+\gamma).
\end{align}
Thus,%
\begin{align}
V(\rho\Vert\sigma) &  =\left\langle \left(  \ln\rho-\ln\sigma-D(\rho
\Vert\sigma)\right)  ^{2}\right\rangle _{\rho}\\
&  =\frac{1}{4}\left\langle \left(  -\hat{x}_{c}^{T}G_{\rho}\hat{x}_{c}%
+(\hat{x}_{c}+\gamma)^{T}G_{\sigma}(\hat{x}_{c}+\gamma)+\left\langle \hat
{x}_{c}^{T}G_{\rho}\hat{x}_{c}-(\hat{x}_{c}+\gamma)^{T}G_{\sigma}(\hat{x}%
_{c}+\gamma)\right\rangle _{\rho}\right)  ^{2}\right\rangle _{\rho}\\
&  =\frac{1}{4}\left\langle \left(  \hat{x}_{c}^{T}G_{\rho}\hat{x}_{c}%
-(\hat{x}_{c}+\gamma)^{T}G_{\sigma}(\hat{x}_{c}+\gamma)+\left\langle -\hat
{x}_{c}^{T}G_{\rho}\hat{x}_{c}+(\hat{x}_{c}+\gamma)^{T}G_{\sigma}(\hat{x}%
_{c}+\gamma)\right\rangle _{\rho}\right)  ^{2}\right\rangle _{\rho}%
\end{align}
Consider that%
\begin{align}
\hat{x}_{c}^{T}G_{\rho}\hat{x}_{c}-(\hat{x}_{c}+\gamma)^{T}G_{\sigma}(\hat
{x}_{c}+\gamma) &  =\hat{x}_{c}^{T}G_{\rho}\hat{x}_{c}-\hat{x}_{c}%
^{T}G_{\sigma}\hat{x}_{c}-\gamma^{T}G_{\sigma}\hat{x}_{c}-\hat{x}_{c}%
^{T}G_{\sigma}\gamma-\gamma^{T}G_{\sigma}\gamma\\
&  =\hat{x}_{c}^{T}\Gamma\hat{x}_{c}-\gamma^{T}G_{\sigma}\hat{x}_{c}-\hat
{x}_{c}^{T}G_{\sigma}\gamma-\gamma^{T}G_{\sigma}\gamma,
\end{align}
which implies that%
\begin{align}
\left\langle \hat{x}_{c}^{T}G_{\rho}\hat{x}_{c}-(\hat{x}_{c}+\gamma
)^{T}G_{\sigma}(\hat{x}_{c}+\gamma)\right\rangle _{\rho} &  =\left\langle
\hat{x}_{c}^{T}\Gamma\hat{x}_{c}\right\rangle _{\rho}-\left\langle \gamma
^{T}G_{\sigma}\hat{x}_{c}\right\rangle _{\rho}-\left\langle \hat{x}_{c}%
^{T}G_{\sigma}\gamma\right\rangle _{\rho}-\left\langle \gamma^{T}G_{\sigma
}\gamma\right\rangle _{\rho}\\
&  =\left\langle \hat{x}_{c}^{T}\Gamma\hat{x}_{c}\right\rangle _{\rho}%
-\gamma^{T}G_{\sigma}\left\langle \hat{x}_{c}\right\rangle _{\rho
}-\left\langle \hat{x}_{c}^{T}\right\rangle _{\rho}G_{\sigma}\gamma-\gamma
^{T}G_{\sigma}\gamma\\
&  =\left\langle \hat{x}_{c}^{T}\Gamma\hat{x}_{c}\right\rangle _{\rho}%
-\gamma^{T}G_{\sigma}\gamma.
\end{align}
Substituting back in above, we find that%
\begin{align}
4V(\rho\Vert\sigma) &  =\left\langle \left(  \hat{x}_{c}^{T}\Gamma\hat{x}%
_{c}-\gamma^{T}G_{\sigma}\hat{x}_{c}-\hat{x}_{c}^{T}G_{\sigma}\gamma
-\gamma^{T}G_{\sigma}\gamma-\left\langle \hat{x}_{c}^{T}\Gamma\hat{x}%
_{c}\right\rangle _{\rho}+\gamma^{T}G_{\sigma}\gamma\right)  ^{2}\right\rangle
_{\rho}\\
&  =\left\langle \left(  \hat{x}_{c}^{T}\Gamma\hat{x}_{c}-\gamma^{T}G_{\sigma
}\hat{x}_{c}-\hat{x}_{c}^{T}G_{\sigma}\gamma-\left\langle \hat{x}_{c}%
^{T}\Gamma\hat{x}_{c}\right\rangle _{\rho}\right)  ^{2}\right\rangle _{\rho}\\
&  =\left\langle \left(  \hat{x}_{c}^{T}\Gamma\hat{x}_{c}-\left\langle \hat
{x}_{c}^{T}\Gamma\hat{x}_{c}\right\rangle _{\rho}-\left[  \gamma^{T}G_{\sigma
}\hat{x}_{c}+\hat{x}_{c}^{T}G_{\sigma}\gamma\right]  \right)  ^{2}%
\right\rangle _{\rho}\\
&  =\left\langle \left(  \hat{x}_{c}^{T}\Gamma\hat{x}_{c}-\left\langle \hat
{x}_{c}^{T}\Gamma\hat{x}_{c}\right\rangle _{\rho}\right)  ^{2}\right\rangle
_{\rho}-\left\langle \left[  \hat{x}_{c}^{T}\Gamma\hat{x}_{c}-\left\langle
\hat{x}_{c}^{T}\Gamma\hat{x}_{c}\right\rangle _{\rho}\right]  \left[
\gamma^{T}G_{\sigma}\hat{x}_{c}+\hat{x}_{c}^{T}G_{\sigma}\gamma\right]
\right\rangle _{\rho}\\
&  \qquad-\left\langle \left[  \gamma^{T}G_{\sigma}\hat{x}_{c}+\hat{x}_{c}%
^{T}G_{\sigma}\gamma\right]  \left[  \hat{x}_{c}^{T}\Gamma\hat{x}%
_{c}-\left\langle \hat{x}_{c}^{T}\Gamma\hat{x}_{c}\right\rangle _{\rho
}\right]  \right\rangle _{\rho}+\left\langle \left(  \gamma^{T}G_{\sigma}%
\hat{x}_{c}+\hat{x}_{c}^{T}G_{\sigma}\gamma\right)  ^{2}\right\rangle _{\rho
}.\label{eq:rel-ent-var-with-means-last-term}%
\end{align}
Consider that the term $\langle(\hat{x}_{c}^{T}\Gamma\hat{x}_{c}-\left\langle
\hat{x}_{c}^{T}\Gamma\hat{x}_{c}\right\rangle _{\rho})^{2}\rangle_{\rho}$ is
the same as what we found for the zero-mean case, and so we already have a
simplified expression for it. It remains to evaluate the latter three terms.
However, the middle two terms are equal to zero. To see this, consider that
any expression involving a product of three quadrature operators, such as
$\left\langle \hat{x}_{c,k}\hat{x}_{c,l}\hat{x}_{c,m}\right\rangle _{\rho}$,
is equal to zero. Consider that%
\begin{align}
\left\langle \hat{x}_{c,k}\hat{x}_{c,l}\hat{x}_{c,m}\right\rangle _{\rho} &
=\frac{1}{2}\left[  \left\langle \hat{x}_{c,k}\hat{x}_{c,l}\hat{x}%
_{c,m}\right\rangle _{\rho}+\left\langle \hat{x}_{c,l}\hat{x}_{c,k}\hat
{x}_{c,m}\right\rangle _{\rho}\right]  +\frac{i}{2}\Omega_{k,l}\left\langle
\hat{x}_{c,m}\right\rangle _{\rho}\\
&  =\frac{1}{2}\left[  \left\langle \hat{x}_{c,k}\hat{x}_{c,l}\hat{x}%
_{c,m}\right\rangle _{\rho}+\left\langle \hat{x}_{c,l}\hat{x}_{c,k}\hat
{x}_{c,m}\right\rangle _{\rho}\right]  ,
\end{align}
where the last line follows because $\left\langle \hat{x}_{c,m}\right\rangle
_{\rho}=0$. By subtracting $\left\langle \hat{x}_{c,k}\hat{x}_{c,l}\hat
{x}_{c,m}\right\rangle _{\rho}/2$, we can conclude that%
\begin{equation}
\left\langle \hat{x}_{c,k}\hat{x}_{c,l}\hat{x}_{c,m}\right\rangle _{\rho
}=\left\langle \hat{x}_{c,l}\hat{x}_{c,k}\hat{x}_{c,m}\right\rangle _{\rho}.
\end{equation}
However, this kind of reasoning could be employed for any swap (and for any
subsequent swap), whence we can conclude that%
\begin{equation}
\left\langle \hat{x}_{c,k}\hat{x}_{c,l}\hat{x}_{c,m}\right\rangle _{\rho
}=\left\langle \left\{  \hat{x}_{c,k}\hat{x}_{c,l}\hat{x}_{c,m}\right\}
_{W}\right\rangle _{\rho}.
\end{equation}
Now we can apply Isserlis' theorem for higher moments of Gaussians to conclude
that $\left\langle \hat{x}_{c,k}\hat{x}_{c,l}\hat{x}_{c,m}\right\rangle
_{\rho}=0$. Thus,%
\begin{equation}
\left\langle \left[  \hat{x}_{c}^{T}\Gamma\hat{x}_{c}-\left\langle \hat{x}%
_{c}^{T}\Gamma\hat{x}_{c}\right\rangle _{\rho}\right]  \left[  \gamma
^{T}G_{\sigma}\hat{x}_{c}+\hat{x}_{c}^{T}G_{\sigma}\gamma\right]
\right\rangle _{\rho}=0,\qquad\left\langle \left[  \gamma^{T}G_{\sigma}\hat
{x}_{c}+\hat{x}_{c}^{T}G_{\sigma}\gamma\right]  \left[  \hat{x}_{c}^{T}%
\Gamma\hat{x}_{c}-\left\langle \hat{x}_{c}^{T}\Gamma\hat{x}_{c}\right\rangle
_{\rho}\right]  \right\rangle _{\rho}=0.\label{eq:odd-term}%
\end{equation}
So it remains to evaluate the last term in
\eqref{eq:rel-ent-var-with-means-last-term}. Consider that $\gamma
^{T}G_{\sigma}\hat{x}_{c}=\hat{x}_{c}^{T}G_{\sigma}\gamma$ because $G_{\sigma
}$ is symmetric. This implies that%
\begin{equation}
\left\langle \left(  \gamma^{T}G_{\sigma}\hat{x}_{c}+\hat{x}_{c}^{T}G_{\sigma
}\gamma\right)  ^{2}\right\rangle _{\rho}=4\left\langle \left(  \gamma
^{T}G_{\sigma}\hat{x}_{c}\right)  ^{2}\right\rangle _{\rho}=4\left\langle
\hat{x}_{c}^{T}G_{\sigma}\gamma\gamma^{T}G_{\sigma}\hat{x}_{c}\right\rangle
_{\rho}=4\left\langle \hat{x}_{c}^{T}rr^{T}\hat{x}_{c}\right\rangle _{\rho},
\end{equation}
where, in the last equality, we have set $r\equiv G_{\sigma}\gamma$.
Continuing,%
\begin{align}
\left\langle \hat{x}_{c}^{T}rr^{T}\hat{x}_{c}\right\rangle _{\rho} &
=\sum_{i,j}\left\langle \hat{x}_{c,i}r_{i}r_{j}\hat{x}_{c,j}\right\rangle
_{\rho}=\sum_{i,j}r_{i}r_{j}\left\langle \hat{x}_{c,i}\hat{x}_{c,j}%
\right\rangle _{\rho}\\
&  =\frac{1}{2}\sum_{i,j}r_{i}r_{j}\left[  \left\langle \left\{  \hat{x}%
_{c,i},\hat{x}_{c,j}\right\}  \right\rangle _{\rho}+i\Omega_{i,j}\right]  \\
&  =\frac{1}{2}\sum_{i,j}r_{i}r_{j}\left\langle \left\{  \hat{x}_{c,i},\hat
{x}_{c,j}\right\}  \right\rangle _{\rho}\\
&  =r^{T}V^{\rho}r=\gamma^{T}G_{\sigma}V^{\rho}G_{\sigma}\gamma
.\label{eq:final-term-rel-ent-var-non-zero-mean}%
\end{align}
Putting together \eqref{eq:rel-ent-var-zero-mean},
\eqref{eq:rel-ent-var-with-means-last-term}, \eqref{eq:odd-term}, and
\eqref{eq:final-term-rel-ent-var-non-zero-mean}, we conclude that%
\begin{equation}
V(\rho\Vert\sigma)=\frac{1}{2}\operatorname{Tr}\{\Gamma V^{\rho}\Gamma
V^{\rho}\}+\frac{1}{8}\operatorname{Tr}\{\Gamma\Omega\Gamma\Omega\}+\gamma^T
G_{\sigma}V^{\rho}G_{\sigma}\gamma.
\end{equation}
\end{proof}

\section{Relative entropy variance formula is well behaved if the second state
is full rank}

\label{sec:well-defined-not-full-support}

In this appendix, we prove that the formula in \eqref{eq:rel-ent-var-Gaussian}
is well defined when $\sigma$ has full support (all symplectic eigenvalues
$>1/2$) and when $\rho$ does not necessarily have full support (some of its
symplectic eigenvalues might be equal to 1/2). We do so by establishing an
alternate formula for the relative entropy variance in terms of the symplectic
eigenvalues and symplectic decompositions of the states $\rho$ and $\sigma$.
Note that we do so only for zero-mean Gaussian states because the extra term
$\gamma^{T}G_{\sigma}V^{\rho}G_{\sigma}\gamma$ in
\eqref{eq:rel-ent-var-Gaussian}\ is finite whenever $\sigma$ has full support.

\begin{proposition}
The relative entropy variance of two zero-mean Gaussian states $\rho$ and
$\sigma$ has the following alternate form:%
\begin{multline}
V(\rho\Vert\sigma)=\operatorname{Tr}\{(A^{\rho})^{2}\left[  (2D^{\rho}%
)^{2}-I\right]  \}-\operatorname{Tr}\{(A^{\rho}\left[  (2D^{\rho}%
)^{2}-I\right]  )^{\oplus2}\widetilde{S}(A^{\sigma})^{\oplus2}\widetilde
{S}^{T}\}\\
+2\operatorname{Tr}\left\{  \left[  \widetilde{S}(A^{\sigma})^{\oplus
2}\widetilde{S}^{T}(D^{\rho})^{\oplus2}\right]  ^{2}\right\}
-\operatorname{Tr}\{(A^{\sigma})^{2}\},
\end{multline}
where $A^{\rho}\equiv\operatorname{arcoth}(2D^{\rho})$, $A^{\sigma}%
\equiv\operatorname{arcoth}(2D^{\sigma})$, and $\widetilde{S}\equiv(S^{\rho
})^{-1}S^{\sigma}$.
\end{proposition}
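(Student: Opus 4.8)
The plan is to start from the zero-mean identity $V(\rho\Vert\sigma)=\tfrac12\operatorname{Tr}\{(\Gamma V^{\rho})^{2}\}+\tfrac18\operatorname{Tr}\{(\Gamma\Omega)^{2}\}$ established in \eqref{eq:rel-ent-var-zero-mean}, with $\Gamma=G_{\rho}-G_{\sigma}$, and to re-express everything through the Williamson data. Expanding the two squares and using cyclicity of the trace (so that the two cross terms in each expansion coincide), $V(\rho\Vert\sigma)$ becomes the sum of a ``$\rho$-only'' part $\tfrac12\operatorname{Tr}\{(G_{\rho}V^{\rho})^{2}\}+\tfrac18\operatorname{Tr}\{(G_{\rho}\Omega)^{2}\}$, a ``$\sigma$-only'' part $\tfrac12\operatorname{Tr}\{(G_{\sigma}V^{\rho})^{2}\}+\tfrac18\operatorname{Tr}\{(G_{\sigma}\Omega)^{2}\}$, and a ``cross'' part $-\operatorname{Tr}\{G_{\rho}V^{\rho}G_{\sigma}V^{\rho}\}-\tfrac14\operatorname{Tr}\{G_{\rho}\Omega G_{\sigma}\Omega\}$. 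I would evaluate these six traces one by one.

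The engine is to substitute the Williamson form $V^{\tau}=S^{\tau}(D^{\tau}\oplus D^{\tau})(S^{\tau})^{T}$ and $G_{\tau}=-2\Omega S^{\tau}(A^{\tau})^{\oplus2}(S^{\tau})^{T}\Omega$ with $A^{\tau}\equiv\operatorname{arcoth}(2D^{\tau})$, and then simplify using three elementary facts: $\Omega^{2}=-I$; symplecticity $(S^{\tau})^{T}\Omega S^{\tau}=\Omega$, which yields $(S^{\tau})^{T}\Omega=\Omega(S^{\tau})^{-1}$ and $\Omega S^{\tau}\Omega=-((S^{\tau})^{T})^{-1}$; and the fact that matrices of the block form $I_{2}\otimes M$ (in particular $(A^{\tau})^{\oplus2}$ and $(D^{\tau})^{\oplus2}$) commute with $\Omega$. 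Writing $\widetilde{S}\equiv(S^{\rho})^{-1}S^{\sigma}$, which is again symplectic, and $S^{\sigma}=S^{\rho}\widetilde{S}$, these identities reduce each of $G_{\rho}V^{\rho}$, $G_{\rho}\Omega$, $G_{\sigma}V^{\rho}$, $G_{\sigma}\Omega$ to an expression built only from $D^{\rho},A^{\rho},A^{\sigma},\widetilde{S},\Omega$ conjugated by a power of $(S^{\rho})^{T}$ that cancels inside the trace. Concretely, the ``$\rho$-only'' part collapses to $\operatorname{Tr}\{(A^{\rho})^{2}[(2D^{\rho})^{2}-I]\}$; the ``$\sigma$-only'' part, after using the symplectic relation $(\widetilde{S}^{T})^{-1}(A^{\sigma})^{\oplus2}\widetilde{S}^{-1}=-\Omega\widetilde{S}(A^{\sigma})^{\oplus2}\widetilde{S}^{T}\Omega$ and cyclicity, becomes $2\operatorname{Tr}\{[\widetilde{S}(A^{\sigma})^{\oplus2}\widetilde{S}^{T}(D^{\rho})^{\oplus2}]^{2}\}-\operatorname{Tr}\{(A^{\sigma})^{2}\}$; and the ``cross'' part becomes $-\operatorname{Tr}\{(A^{\rho}[(2D^{\rho})^{2}-I])^{\oplus2}\widetilde{S}(A^{\sigma})^{\oplus2}\widetilde{S}^{T}\}$, where the $-I$ arises by combining the $\Gamma V^{\rho}$ cross trace (which supplies a factor $4(D^{\rho})^{2}A^{\rho}$) with the $\Gamma\Omega$ cross trace (which supplies a bare $A^{\rho}$). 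Summing the three pieces gives the asserted formula.

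I expect the principal obstacle to be purely bookkeeping of transposes and inverses: pushing $\Omega$ through $S^{\sigma}=S^{\rho}\widetilde{S}$ naturally produces $(\widetilde{S}^{T})^{-1}$ and $\widetilde{S}^{-1}$, and one must carefully use symplecticity of $\widetilde{S}$ together with its commutation relations with $\Omega$ to convert these into the $\widetilde{S}$ and $\widetilde{S}^{T}$ that appear in the statement; the rest is mechanical. As a closing remark I would note the payoff: the final expression is manifestly finite whenever $\sigma$ has full support, since $A^{\sigma}$ and $\widetilde{S}$ are then finite and, although $A^{\rho}=\operatorname{arcoth}(2D^{\rho})$ diverges on any mode with $\nu_{i}^{\rho}=1/2$, in the first two terms it always appears paired with the vanishing factor $(2D^{\rho})^{2}-I$, while $\operatorname{arcoth}(2x)^{2}(4x^{2}-1)\to0$ and $\operatorname{arcoth}(2x)(4x^{2}-1)\to0$ as $x\to1/2^{+}$, and the third term involves only $D^{\rho},A^{\sigma},\widetilde{S}$. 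Since $V(\rho\Vert\sigma)$ equals this expression, \eqref{eq:rel-ent-var-Gaussian} is well defined for full-rank $\sigma$ even though its individual traces diverge when $\rho$ is rank-deficient.
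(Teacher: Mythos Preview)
Your proposal is correct and follows essentially the same approach as the paper: start from the zero-mean formula \eqref{eq:rel-ent-var-zero-mean}, expand $\Gamma=G_{\rho}-G_{\sigma}$ into six traces, substitute the Williamson data together with the symplectic identities $S^{T}\Omega=\Omega S^{-1}$ and $\Omega^{2}=-I$, and combine. The only difference is organizational: the paper simplifies $\operatorname{Tr}\{(\Gamma V^{\rho})^{2}\}$ and $\operatorname{Tr}\{(\Gamma\Omega)^{2}\}$ separately before merging, whereas you group the six traces into ``$\rho$-only'', ``$\sigma$-only'', and ``cross'' pieces from the outset; the individual trace evaluations and the finiteness remark are the same.
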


Before delving into a proof of the above proposition, let us comment on why
the above alternate formula demonstrates that relative entropy variance is
finite when $\rho$ does not necessarily have full support. Consider that the
diagonal matrices $(A^{\rho})^{2}\left[  (2D^{\rho})^{2}-I\right]  $ and
$A^{\rho}\left[  (2D^{\rho})^{2}-I\right]  $ have the following respective
entries for $\lambda\geq1/2$:%
\begin{equation}
f_{1}(\lambda)\equiv\left[  \operatorname{arcoth}(2\lambda)\right]
^{2}\left[  \left(  2\lambda\right)  ^{2}-1\right]  ,\qquad f_{2}%
(\lambda)\equiv\operatorname{arcoth}(2\lambda)\left[  \left(  2\lambda\right)
^{2}-1\right]  ,
\end{equation}
from which we readily see that $\lim_{\lambda\rightarrow1/2}f_{1}%
(\lambda)=\lim_{\lambda\rightarrow1/2}f_{2}(\lambda)=0$ after an application
of L'Hospital's rule. We now proceed with a proof of the above proposition.

\bigskip

\begin{proof}
Our starting point is the formula in \eqref{eq:rel-ent-var-Gaussian} for the
relative entropy variance of two zero-mean Gaussian states:%
\begin{equation}
V(\rho\Vert\sigma)=\frac{1}{2}\operatorname{Tr}\{\Gamma V^{\rho}\Gamma
V^{\rho}\}+\frac{1}{8}\operatorname{Tr}\{\Gamma\Omega\Gamma\Omega\},
\end{equation}
where $\Gamma=G^{\rho}-G^{\sigma}$. Consider that%
\begin{align}
V^{\rho}  &  =S^{\rho}(D^{\rho})^{\oplus2}(S^{\rho})^{T},\\
G^{\rho}  &  =-2\Omega S^{\rho}(A^{\rho})^{\oplus2}(S^{\rho})^{T}\Omega,\\
A^{\rho}  &  =\operatorname{arcoth}(2D^{\rho}).
\end{align}

Expanding the first term, we find that%
\begin{equation}
\operatorname{Tr}\{\Gamma V^{\rho}\Gamma V^{\rho}\}=\operatorname{Tr}%
\{G^{\rho}V^{\rho}G^{\rho}V^{\rho}\}-2\operatorname{Tr}\{G^{\rho}V^{\rho
}G^{\sigma}V^{\rho}\}+\operatorname{Tr}\{G^{\sigma}V^{\rho}G^{\sigma}V^{\rho
}\}.
\end{equation}
We now simplify these one at a time. Consider that%
\begin{align}
\operatorname{Tr}\{G^{\rho}V^{\rho}G^{\rho}V^{\rho}\}  &  =4\operatorname{Tr}%
\{\Omega S^{\rho}(A^{\rho})^{\oplus2}(S^{\rho})^{T}\Omega S^{\rho}(D^{\rho
})^{\oplus2}(S^{\rho})^{T}\Omega S^{\rho}(A^{\rho})^{\oplus2}(S^{\rho}%
)^{T}\Omega S^{\rho}(D^{\rho})^{\oplus2}(S^{\rho})^{T}\}\\
&  =4\operatorname{Tr}\{\Omega(A^{\rho})^{\oplus2}\Omega(D^{\rho})^{\oplus
2}\Omega(A^{\rho})^{\oplus2}\Omega(D^{\rho})^{\oplus2}\}\\
&  =4\operatorname{Tr}\{(A^{\rho})^{\oplus2}(D^{\rho})^{\oplus2}(A^{\rho
})^{\oplus2}(D^{\rho})^{\oplus2}\}\\
&  =4\operatorname{Tr}\{(\left[  A^{\rho}D^{\rho}\right]  ^{2})^{\oplus
2}\}=8\operatorname{Tr}\{\left[  A^{\rho}D^{\rho}\right]  ^{2}%
\}=8\operatorname{Tr}\{(A^{\rho})^{2}(D^{\rho})^{2}\}.
\end{align}
Now, using that $S^{T}\Omega=\Omega S^{-1}$, consider that%
\begin{align}
\operatorname{Tr}\{G^{\rho}V^{\rho}G^{\sigma}V^{\rho}\}  &
=4\operatorname{Tr}\{\Omega S^{\rho}(A^{\rho})^{\oplus2}(S^{\rho})^{T}\Omega
S^{\rho}(D^{\rho})^{\oplus2}(S^{\rho})^{T}\Omega S^{\sigma}(A^{\sigma
})^{\oplus2}(S^{\sigma})^{T}\Omega S^{\rho}(D^{\rho})^{\oplus2}(S^{\rho}%
)^{T}\}\\
&  =4\operatorname{Tr}\{\Omega(A^{\rho})^{\oplus2}\Omega(D^{\rho})^{\oplus
2}\Omega(S^{\rho})^{-1}S^{\sigma}(A^{\sigma})^{\oplus2}(S^{\sigma}%
)^{T}(S^{\rho})^{-T}\Omega(D^{\rho})^{\oplus2}\}\\
&  =4\operatorname{Tr}\{(D^{\rho}A^{\rho}D^{\rho})^{\oplus2}(S^{\rho}%
)^{-1}S^{\sigma}(A^{\sigma})^{\oplus2}(S^{\sigma})^{T}(S^{\rho})^{-T}%
\}=4\operatorname{Tr}\{((D^{\rho})^{2}A^{\rho})^{\oplus2}\widetilde
{S}(A^{\sigma})^{\oplus2}\widetilde{S}^{T}\},
\end{align}
where in the last equality we have used the definition $\widetilde{S}%
\equiv(S^{\rho})^{-1}S^{\sigma}$. Also consider that%
\begin{align}
\operatorname{Tr}\{G^{\sigma}V^{\rho}G^{\sigma}V^{\rho}\}  &
=4\operatorname{Tr}\{\Omega S^{\sigma}(A^{\sigma})^{\oplus2}(S^{\sigma}%
)^{T}\Omega S^{\rho}(D^{\rho})^{\oplus2}(S^{\rho})^{T}\Omega S^{\sigma
}(A^{\sigma})^{\oplus2}(S^{\sigma})^{T}\Omega S^{\rho}(D^{\rho})^{\oplus
2}(S^{\rho})^{T}\}\\
&  =4\operatorname{Tr}\{(S^{\rho})^{-1}S^{\sigma}(A^{\sigma})^{\oplus
2}(S^{\sigma})^{T}(S^{\rho})^{-T}\Omega(D^{\rho})^{\oplus2}\Omega(S^{\rho
})^{-1}S^{\sigma}(A^{\sigma})^{\oplus2}(S^{\sigma})^{T}(S^{\rho})^{-T}%
\Omega(D^{\rho})^{\oplus2}\Omega\}\\
&  =4\operatorname{Tr}\{\widetilde{S}(A^{\sigma})^{\oplus2}\widetilde{S}%
^{T}\Omega(D^{\rho})^{\oplus2}\Omega\widetilde{S}(A^{\sigma})^{\oplus
2}\widetilde{S}^{T}\Omega(D^{\rho})^{\oplus2}\Omega\}\\
&  =4\operatorname{Tr}\{\widetilde{S}(A^{\sigma})^{\oplus2}\widetilde{S}%
^{T}(D^{\rho})^{\oplus2}\widetilde{S}(A^{\sigma})^{\oplus2}\widetilde{S}%
^{T}(D^{\rho})^{\oplus2}\}=4\operatorname{Tr}\left\{  \left[  \widetilde
{S}(A^{\sigma})^{\oplus2}\widetilde{S}^{T}(D^{\rho})^{\oplus2}\right]
^{2}\right\}  .
\end{align}

Now we expand the second term%
\begin{equation}
\operatorname{Tr}\{\Gamma\Omega\Gamma\Omega\}=\operatorname{Tr}\{G^{\rho
}\Omega G^{\rho}\Omega\}-2\operatorname{Tr}\{G^{\rho}\Omega G^{\sigma}%
\Omega\}+\operatorname{Tr}\{G^{\sigma}\Omega G^{\sigma}\Omega\}.
\end{equation}
Consider that%
\begin{align}
\operatorname{Tr}\{G^{\rho}\Omega G^{\rho}\Omega\}  &  =4\operatorname{Tr}%
\{\Omega S^{\rho}(A^{\rho})^{\oplus2}(S^{\rho})^{T}\Omega\Omega\Omega S^{\rho
}(A^{\rho})^{\oplus2}(S^{\rho})^{T}\Omega\Omega\}\\
&  =4\operatorname{Tr}\{\Omega(A^{\rho})^{\oplus2}\Omega(A^{\rho})^{\oplus
2}\}=-4\operatorname{Tr}\{(A^{\rho})^{\oplus2}(A^{\rho})^{\oplus2}\}\\
&  =-8\operatorname{Tr}\{(A^{\rho})^{2}\}.
\end{align}
We also have that%
\begin{align}
\operatorname{Tr}\{G^{\rho}\Omega G^{\sigma}\Omega\}  &  =4\operatorname{Tr}%
\{\Omega S^{\rho}(A^{\rho})^{\oplus2}(S^{\rho})^{T}\Omega\Omega\Omega
S^{\sigma}(A^{\sigma})^{\oplus2}(S^{\sigma})^{T}\Omega\Omega\}\\
&  =4\operatorname{Tr}\{\Omega S^{\rho}(A^{\rho})^{\oplus2}(S^{\rho}%
)^{T}\Omega S^{\sigma}(A^{\sigma})^{\oplus2}(S^{\sigma})^{T}\}\\
&  =4\operatorname{Tr}\{\Omega(A^{\rho})^{\oplus2}\Omega(S^{\rho}%
)^{-1}S^{\sigma}(A^{\sigma})^{\oplus2}(S^{\sigma})^{T}(S^{\rho})^{-T}\}\\
&  =-4\operatorname{Tr}\{(A^{\rho})^{\oplus2}\widetilde{S}(A^{\sigma}%
)^{\oplus2}\widetilde{S}^{T}\}.
\end{align}
By the same calculation above, we see that%
\begin{equation}
\operatorname{Tr}\{G^{\sigma}\Omega G^{\sigma}\Omega\}=-8\operatorname{Tr}%
\{(A^{\sigma})^{2}\}.
\end{equation}

Now we combine all terms together to find that%
\begin{align}
V(\rho\Vert\sigma)  &  =\frac{1}{2}\left[  \operatorname{Tr}\{G^{\rho}V^{\rho
}G^{\rho}V^{\rho}\}-2\operatorname{Tr}\{G^{\rho}V^{\rho}G^{\sigma}V^{\rho
}\}+\operatorname{Tr}\{G^{\sigma}V^{\rho}G^{\sigma}V^{\rho}\}\right]
\nonumber\\
&  \qquad+\frac{1}{8}\left[  \operatorname{Tr}\{G^{\rho}\Omega G^{\rho}%
\Omega\}-2\operatorname{Tr}\{G^{\rho}\Omega G^{\sigma}\Omega
\}+\operatorname{Tr}\{G^{\sigma}\Omega G^{\sigma}\Omega\}\right] \\
&  =4\operatorname{Tr}\{(A^{\rho})^{2}(D^{\rho})^{2}\}-4\operatorname{Tr}%
\{((D^{\rho})^{2}A^{\rho})^{\oplus2}\widetilde{S}(A^{\sigma})^{\oplus
2}\widetilde{S}^{T}\}\nonumber\\
&  \qquad+2\operatorname{Tr}\left\{  \left[  \widetilde{S}(A^{\sigma}%
)^{\oplus2}\widetilde{S}^{T}(D^{\rho})^{\oplus2}\right]  ^{2}\right\}
-\operatorname{Tr}\{(A^{\rho})^{2}\}\nonumber\\
&  \qquad+\operatorname{Tr}\{(A^{\rho})^{\oplus2}\widetilde{S}(A^{\sigma
})^{\oplus2}\widetilde{S}^{T}\}-\operatorname{Tr}\{(A^{\sigma})^{2}\}\\
&  =\operatorname{Tr}\{(A^{\rho})^{2}\left[  (2D^{\rho})^{2}-I\right]
\}-\operatorname{Tr}\{(\left[  (2D^{\rho})^{2}-I\right]  A^{\rho})^{\oplus
2}\widetilde{S}(A^{\sigma})^{\oplus2}\widetilde{S}^{T}\}\nonumber\\
&  \qquad+2\operatorname{Tr}\left\{  \left[  \widetilde{S}(A^{\sigma}%
)^{\oplus2}\widetilde{S}^{T}(D^{\rho})^{\oplus2}\right]  ^{2}\right\}
-\operatorname{Tr}\{(A^{\sigma})^{2}\}.
\end{align}
This concludes the proof.
\end{proof}

\section{Relative entropy variance for two-mode Gaussian states in standard
form}

\label{sec:standard-form-cov-matrix}Two-mode Gaussian states with covariance
matrices in \textquotedblleft standard form\textquotedblright\ have a
covariance matrix as follows \cite{DGCZ00,S00,adesso14}:%
\begin{equation}
V=\left(  I_{2}\oplus\sigma_{Z}\right)  V_{0}^{\oplus2}\left(  I_{2}%
\oplus\sigma_{Z}\right)  =%
\begin{bmatrix}
a & c\\
c & b
\end{bmatrix}
\oplus%
\begin{bmatrix}
a & -c\\
-c & b
\end{bmatrix}
, \label{eq:CM-standard-form}%
\end{equation}
where $\sigma_{Z}$ is the Pauli $Z$ matrix,%
\begin{equation}
V_{0}\equiv%
\begin{bmatrix}
a & c\\
c & b
\end{bmatrix}
, \label{eq:V_0-from-CM}%
\end{equation}
$a,b\geq1/2$, and%
\begin{equation}
c\leq\min\left\{  \sqrt{\left(  a-1/2\right)  \left(  b+1/2\right)  }%
,\sqrt{\left(  a+1/2\right)  \left(  b-1/2\right)  }\right\}  .
\end{equation}
For such states, there are less calculations to perform when calculating the
relative entropy variance due to the extra symmetry that they have. The
symplectic diagonalization of the covariance matrix~$V$ simplifies as well:%
\begin{equation}
V=\left(  I_{2}\oplus\sigma_{Z}\right)  S_{0}^{\oplus2}\left(  I_{2}%
\oplus\sigma_{Z}\right)  D^{\oplus2}\left(  I_{2}\oplus\sigma_{Z}\right)
S_{0}^{\oplus2}\left(  I_{2}\oplus\sigma_{Z}\right)  ,
\end{equation}
where%
\begin{align}
S_{0}  &  \equiv%
\begin{bmatrix}
\omega_{+} & \omega_{-}\\
\omega_{-} & \omega_{+}%
\end{bmatrix}
,\ \ \ \ \ \ \ \ \omega_{\pm}\equiv\sqrt{\frac{a+b\pm\sqrt{y}}{2\sqrt{y}}%
},\ \ \ \ \ \ \ \ D\equiv%
\begin{bmatrix}
\nu_{-} & 0\\
0 & \nu_{+}%
\end{bmatrix}
,\\
\nu_{\pm}  &  \equiv\left[  \sqrt{y}\pm\left(  b-a\right)  \right]
/2,\ \ \ \ \ \ \ \ y\equiv\left(  a+b\right)  ^{2}-4c^{2}.
\end{align}
From this, we can deduce that the $G$ matrix defined in \eqref{eq:G_rho} for
such states simplifies as follows:%
\begin{align}
G  &  =-2\Omega\left(  \left(  I_{2}\oplus\sigma_{Z}\right)  S_{0}^{\oplus
2}\left(  I_{2}\oplus\sigma_{Z}\right)  \right)  \operatorname{arcoth}%
(2D)^{\oplus2}\left(  \left(  I_{2}\oplus\sigma_{Z}\right)  S_{0}^{\oplus
2}\left(  I_{2}\oplus\sigma_{Z}\right)  \right)  \Omega\\
&  =-2\Omega\left(  I_{2}\oplus\sigma_{Z}\right)  S_{0}^{\oplus2}%
\operatorname{arcoth}(2D)^{\oplus2}S_{0}^{\oplus2}\left(  I_{2}\oplus
\sigma_{Z}\right)  \Omega\\
&  =-2\Omega\left(  I_{2}\oplus\sigma_{Z}\right)  \left[  S_{0}%
\operatorname{arcoth}(2D)S_{0}\right]  ^{\oplus2}\left(  I_{2}\oplus\sigma
_{Z}\right)  \Omega.
\end{align}
So then%
\begin{align}
\Gamma &  =G^{\rho}-G^{\sigma}\\
&  =-2\Omega\left(  I_{2}\oplus\sigma_{Z}\right)  \left[  S_{0}^{\rho
}\operatorname{arcoth}(2D^{\rho})S_{0}^{\rho}-S_{0}^{\sigma}%
\operatorname{arcoth}(2D^{\sigma})S_{0}^{\sigma}\right]  ^{\oplus2}\left(
I_{2}\oplus\sigma_{Z}\right)  \Omega\\
&  =-2\Omega\left(  I_{2}\oplus\sigma_{Z}\right)  \left[  \Gamma^{\prime
}\right]  ^{\oplus2}\left(  I_{2}\oplus\sigma_{Z}\right)  \Omega,
\end{align}
where%
\begin{equation}
\Gamma^{\prime}\equiv S_{0}^{\rho}\operatorname{arcoth}(2D^{\rho})S_{0}^{\rho
}-S_{0}^{\sigma}\operatorname{arcoth}(2D^{\sigma})S_{0}^{\sigma}.
\label{eq:Delta-prime}%
\end{equation}

We can now give a simplified formula for the relative entropy variance of
two-mode Gaussian states in standard form:

\begin{lemma}
\label{lem:rel-ent-var-standard-form-simple}The relative entropy variance
$V(\rho\Vert\sigma)$, as defined in \eqref{eq:relative-entropy-variance},
simplifies as follows for two-mode zero-mean Gaussian states $\rho$ and
$\sigma$ with covariance matrices in the standard form
\eqref{eq:CM-standard-form}:%
\begin{equation}
V(\rho\Vert\sigma)=
4\operatorname{Tr}\!\left\{  \left[  \sigma_{Z}\Gamma
^{\prime}\sigma_{Z}V_{0}^{\rho}\right]  ^{2}\right\}  -\operatorname{Tr}%
\{  \sigma_{Z}\Gamma^{\prime}\sigma_{Z}\Gamma^{\prime}\}  ,
\end{equation}
where $V_{0}^{\rho}$ and $\Gamma^{\prime}$ are defined in
\eqref{eq:V_0-from-CM} and \eqref{eq:Delta-prime}, respectively.
\end{lemma}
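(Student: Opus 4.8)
The plan is to deduce this from the general zero-mean formula by a single conjugation trick. The starting point is \eqref{eq:rel-ent-var-zero-mean}, $V(\rho\Vert\sigma)=\tfrac{1}{2}\operatorname{Tr}\{\Gamma V^{\rho}\Gamma V^{\rho}\}+\tfrac{1}{8}\operatorname{Tr}\{\Gamma\Omega\Gamma\Omega\}$, combined with the two representations established just above the lemma: setting $P\equiv I_{2}\oplus\sigma_{Z}$ (which satisfies $P=P^{T}$ and $P^{2}=I_{4}$), one has $V^{\rho}=P\,(V_{0}^{\rho})^{\oplus2}\,P$ from \eqref{eq:CM-standard-form} and $\Gamma=-2\,\Omega P\,[\Gamma']^{\oplus2}\,P\Omega$. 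Since the trace is cyclic and $P^{2}=I_{4}$, I would insert factors of $P$ to write $\operatorname{Tr}\{\Gamma V^{\rho}\Gamma V^{\rho}\}=\operatorname{Tr}\{(P\Gamma P)(PV^{\rho}P)(P\Gamma P)(PV^{\rho}P)\}$ and $\operatorname{Tr}\{\Gamma\Omega\Gamma\Omega\}=\operatorname{Tr}\{(P\Gamma P)(P\Omega P)(P\Gamma P)(P\Omega P)\}$, so that everything is controlled by the three matrices $PV^{\rho}P=(V_{0}^{\rho})^{\oplus2}$, $P\Omega P$, and $P\Gamma P=-2\,(P\Omega P)[\Gamma']^{\oplus2}(P\Omega P)$.

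The computation that makes everything collapse is the $4\times4$ identity $P\Omega P=\Omega_{1}\otimes\sigma_{Z}$, where $\Omega_{1}$ is the $2\times2$ matrix with $\Omega=\Omega_{1}\otimes I_{n}$ (cf.\ \eqref{eq:symplectic-form}); this is one line of explicit multiplication. In the same block ordering one has $[\Gamma']^{\oplus2}=I_{2}\otimes\Gamma'$ and $(V_{0}^{\rho})^{\oplus2}=I_{2}\otimes V_{0}^{\rho}$, so using $(A\otimes B)(C\otimes D)=AC\otimes BD$ together with $\Omega_{1}^{2}=-I_{2}$ gives $(P\Omega P)[\Gamma']^{\oplus2}(P\Omega P)=-\,I_{2}\otimes(\sigma_{Z}\Gamma'\sigma_{Z})$ and hence $P\Gamma P=2\,I_{2}\otimes(\sigma_{Z}\Gamma'\sigma_{Z})$. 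Writing $\Gamma''\equiv\sigma_{Z}\Gamma'\sigma_{Z}$ for brevity and using $\operatorname{Tr}\{I_{2}\otimes X\}=2\operatorname{Tr}\{X\}$, the first trace becomes $\operatorname{Tr}\{(2I_{2}\otimes\Gamma'')(I_{2}\otimes V_{0}^{\rho})(2I_{2}\otimes\Gamma'')(I_{2}\otimes V_{0}^{\rho})\}=8\operatorname{Tr}\{(\Gamma''V_{0}^{\rho})^{2}\}$, while the second becomes $4\operatorname{Tr}\{(\Omega_{1}\otimes\Gamma''\sigma_{Z})^{2}\}=4\operatorname{Tr}\{(-I_{2})\otimes(\Gamma''\sigma_{Z}\Gamma''\sigma_{Z})\}=-8\operatorname{Tr}\{\Gamma''\sigma_{Z}\Gamma''\sigma_{Z}\}$, and since $\Gamma''\sigma_{Z}\Gamma''\sigma_{Z}=\sigma_{Z}\Gamma'\sigma_{Z}\Gamma'$ by $\sigma_{Z}^{2}=I_{2}$, this equals $-8\operatorname{Tr}\{\sigma_{Z}\Gamma'\sigma_{Z}\Gamma'\}$. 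Multiplying these two results by $\tfrac12$ and $\tfrac18$ respectively and adding yields exactly $V(\rho\Vert\sigma)=4\operatorname{Tr}\{[\sigma_{Z}\Gamma'\sigma_{Z}V_{0}^{\rho}]^{2}\}-\operatorname{Tr}\{\sigma_{Z}\Gamma'\sigma_{Z}\Gamma'\}$.

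The calculation is short, and the only thing that requires genuine care is consistent bookkeeping of the quadrature ordering: one must ensure that $\Omega$, the $(\cdot)^{\oplus2}$ operations, and $I_{2}\oplus\sigma_{Z}$ are all written with respect to the same block structure on the four quadratures, so that the factorizations $[\Gamma']^{\oplus2}=I_{2}\otimes\Gamma'$, $(V_{0}^{\rho})^{\oplus2}=I_{2}\otimes V_{0}^{\rho}$, and $P\Omega P=\Omega_{1}\otimes\sigma_{Z}$ share the same ``outer'' index. Once $P\Omega P=\Omega_{1}\otimes\sigma_{Z}$ is verified, I do not expect any real obstacle; the remainder is routine matrix algebra, and the final expression can be independently sanity-checked against the general formula \eqref{eq:rel-ent-var-Gaussian} on simple cases such as the quantum-illumination states.
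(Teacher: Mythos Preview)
Your proposal is correct and follows essentially the same route as the paper: both start from the zero-mean formula \eqref{eq:rel-ent-var-zero-mean}, substitute the block expressions $\Gamma=-2\Omega P[\Gamma']^{\oplus2}P\Omega$ and $V^{\rho}=P(V_{0}^{\rho})^{\oplus2}P$, and reduce via the identity $P\Omega P=\begin{bmatrix}0&\sigma_{Z}\\-\sigma_{Z}&0\end{bmatrix}=\Omega_{1}\otimes\sigma_{Z}$ to the $2\times2$ traces. The only difference is cosmetic---you package the block manipulations as tensor products and a single conjugation by $P$, whereas the paper carries out the same steps in explicit block-matrix form.
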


\begin{proof}
Starting with the formula in \eqref{eq:rel-ent-var-Gaussian}, we find that%
\begin{align}
&  \!\!\!\!\operatorname{Tr}\{\Gamma V^{\rho}\Gamma V^{\rho}\}\\
&  =\operatorname{Tr}\!\left\{  \left[  \left(  -2\Omega\left(  I_{2}%
\oplus\sigma_{Z}\right)  \left[  \Gamma^{\prime}\right]  ^{\oplus2}\left(
I_{2}\oplus\sigma_{Z}\right)  \Omega\right)  \left(  I_{2}\oplus\sigma
_{Z}\right)  \left(  V_{0}^{\rho}\right)  ^{\oplus2}\left(  I_{2}\oplus
\sigma_{Z}\right)  \right]  ^{2}\right\} \\
&  =4\operatorname{Tr}\!\left\{  \left[  \left(  I_{2}\oplus\sigma_{Z}\right)
\Omega\left(  I_{2}\oplus\sigma_{Z}\right)  \left[  \Gamma^{\prime}\right]
^{\oplus2}\left(  I_{2}\oplus\sigma_{Z}\right)  \Omega\left(  I_{2}%
\oplus\sigma_{Z}\right)  \left(  V_{0}^{\rho}\right)  ^{\oplus2}\right]
^{2}\right\} \\
&  =4\operatorname{Tr}\!\left\{  \left[
\begin{bmatrix}
0 & \sigma_{Z}\\
-\sigma_{Z} & 0
\end{bmatrix}
\left[  \Gamma^{\prime}\right]  ^{\oplus2}%
\begin{bmatrix}
0 & \sigma_{Z}\\
-\sigma_{Z} & 0
\end{bmatrix}
\left(  V_{0}^{\rho}\right)  ^{\oplus2}\right]  ^{2}\right\}
=8\operatorname{Tr}\!\left\{  \left[  \sigma_{Z}\Gamma^{\prime}\sigma_{Z}%
V_{0}^{\rho}\right]  ^{2}\right\}  .
\end{align}
We can simplify the other term as well:%
\begin{align}
\operatorname{Tr}\{\Gamma\Omega\Gamma\Omega\}  &  =\operatorname{Tr}\!\left\{
\left[  \left(  -2\Omega\left(  I_{2}\oplus\sigma_{Z}\right)  \left[
\Gamma^{\prime}\right]  ^{\oplus2}\left(  I_{2}\oplus\sigma_{Z}\right)
\Omega\right)  \Omega\right]  ^{2}\right\} \\
&  =4\operatorname{Tr}\!\left\{  \left[  \left(  \Omega\left(  I_{2}\oplus
\sigma_{Z}\right)  \left[  \Gamma^{\prime}\right]  ^{\oplus2}\left(
I_{2}\oplus\sigma_{Z}\right)  \Omega\right)  \Omega\right]  ^{2}\right\} \\
&  =4\operatorname{Tr}\!\left\{  \left[  \Omega\left(  I_{2}\oplus\sigma
_{Z}\right)  \left[  \Gamma^{\prime}\right]  ^{\oplus2}\left(  I_{2}%
\oplus\sigma_{Z}\right)  \right]  ^{2}\right\} \\
&  =4\operatorname{Tr}\!\left\{  \left[  \left(  I_{2}\oplus\sigma_{Z}\right)
\Omega\left(  I_{2}\oplus\sigma_{Z}\right)  \left[  \Gamma^{\prime}\right]
^{\oplus2}\right]  ^{2}\right\} \\
&  =4\operatorname{Tr}\!\left\{  \left[
\begin{bmatrix}
0 & \sigma_{Z}\\
-\sigma_{Z} & 0
\end{bmatrix}
\left[  \Gamma^{\prime}\right]  ^{\oplus2}\right]  ^{2}\right\}
=-8\operatorname{Tr}\{  \sigma_{Z}\Gamma^{\prime}\sigma_{Z}\Gamma
^{\prime}\}  ,
\end{align}
concluding the proof.
\end{proof}

A similar analysis gives the following simplification as well:

\begin{lemma}
The relative entropy $D(\rho\Vert\sigma)$ simplifies as follows for two-mode
zero-mean Gaussian states $\rho$ and $\sigma$ with covariance matrices in the
standard form \eqref{eq:CM-standard-form}:%
\begin{equation}
D(\rho\Vert\sigma)=\left[  \ln Z_{\sigma}+4\operatorname{Tr}\{\sigma_{Z}%
S_{0}^{\sigma}\operatorname{arcoth}(2D^{\sigma})S_{0}^{\sigma}V_{0}^{\rho
}\}\right]  /2-g(\nu_{+}^{\rho}-1/2)-g(\nu_{-}^{\rho}-1/2),
\end{equation}
where $V_{0}^{\rho}$ and $\Gamma^{\prime}$ are defined in
\eqref{eq:V_0-from-CM} and \eqref{eq:Delta-prime}, respectively.
\end{lemma}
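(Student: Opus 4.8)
The plan is to start from the compact form of the Gaussian relative entropy already recorded in the main text,
\[
D(\rho\Vert\sigma)=\tfrac{1}{2}\left[\ln Z_\sigma+\operatorname{Tr}\{G_\sigma V^\rho\}+\gamma^T G_\sigma\gamma\right]-\sum_{i=1}^{n}g(\nu_i^\rho-1/2),
\]
and specialize it to two-mode, zero-mean Gaussian states in standard form. Since $\gamma=\mu^\rho-\mu^\sigma=0$ under the hypotheses, the $\gamma^T G_\sigma\gamma$ term vanishes, so it only remains to (i) name the symplectic eigenvalues $\nu_i^\rho$ of $\rho$ and (ii) reduce the $4\times4$ trace $\operatorname{Tr}\{G_\sigma V^\rho\}$ to a $2\times2$ trace by exploiting the extra symmetry of the standard form. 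For (i), the explicit symplectic diagonalization written out just before Lemma~\ref{lem:rel-ent-var-standard-form-simple} shows that the symplectic eigenvalues of a standard-form $\rho$ are precisely the diagonal entries $\nu_\pm^\rho$ of $D^\rho$, so $\sum_{i=1}^{2}g(\nu_i^\rho-1/2)=g(\nu_+^\rho-1/2)+g(\nu_-^\rho-1/2)$, which already accounts for the last two terms in the claimed identity. (This form of $D$ is also the convenient one here because it stays finite even when $\rho$ is rank-deficient, as long as $\sigma$ is faithful.)

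For (ii) I would substitute $V^\rho=(I_2\oplus\sigma_Z)(V_0^\rho)^{\oplus2}(I_2\oplus\sigma_Z)$ together with the standard-form expression $G_\sigma=-2\Omega(I_2\oplus\sigma_Z)\big[S_0^\sigma\operatorname{arcoth}(2D^\sigma)S_0^\sigma\big]^{\oplus2}(I_2\oplus\sigma_Z)\Omega$ derived immediately above Lemma~\ref{lem:rel-ent-var-standard-form-simple}, and then push the outer $I_2\oplus\sigma_Z$ factors together using $(I_2\oplus\sigma_Z)^2=I_4$, cyclicity of the trace, and the identity $(I_2\oplus\sigma_Z)\,\Omega\,(I_2\oplus\sigma_Z)=\left[\begin{smallmatrix}0 & \sigma_Z\\ -\sigma_Z & 0\end{smallmatrix}\right]$. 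This is exactly the block manipulation carried out in the proof of Lemma~\ref{lem:rel-ent-var-standard-form-simple} to pass from $\operatorname{Tr}\{\Gamma V^\rho\Gamma V^\rho\}$ to $8\operatorname{Tr}\{[\sigma_Z\Gamma'\sigma_Z V_0^\rho]^2\}$, with two modifications: only a single power of $G_\sigma$ appears rather than a square, and the inner symmetric matrix is $S_0^\sigma\operatorname{arcoth}(2D^\sigma)S_0^\sigma$ instead of $\Gamma'$. Using $\left[\begin{smallmatrix}0 & \sigma_Z\\ -\sigma_Z & 0\end{smallmatrix}\right]M^{\oplus2}\left[\begin{smallmatrix}0 & \sigma_Z\\ -\sigma_Z & 0\end{smallmatrix}\right]=-(\sigma_Z M\sigma_Z)^{\oplus2}$ and then tracing the resulting direct sum, the $4\times4$ trace collapses to the $2\times2$ trace $4\operatorname{Tr}\{\sigma_Z S_0^\sigma\operatorname{arcoth}(2D^\sigma)S_0^\sigma\sigma_Z V_0^\rho\}$ appearing (up to the precise placement of the $\sigma_Z$ conjugations, which I would re-derive carefully) in the statement; the overall constant $4$ is the product of the prefactor $-2$ in $G_\sigma$, the sign $-1$ just noted, and the factor $2$ from tracing a direct sum. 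Inserting this and the two $g$-terms into the specialized formula from the first paragraph yields the lemma.

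No new ingredient beyond the algebra already used in the proof of Lemma~\ref{lem:rel-ent-var-standard-form-simple} is required, so I do not expect a genuine conceptual obstacle; the part that demands care is purely the bookkeeping of $\sigma_Z$-conjugations and of the overall sign and constant, which is more delicate here than in the variance lemma because one has a single $G_\sigma$ rather than the symmetric block $\Gamma V^\rho\Gamma V^\rho$, and because one of the two $\Omega$'s inside $G_\sigma$ must be paired with the flanking $(I_2\oplus\sigma_Z)$ factors. As an independent check on the normalization of the $g$-terms I would verify the final expression against the identity $\ln Z_\rho+\operatorname{Tr}\{G_\rho V^\rho\}=2\sum_i g(\nu_i^\rho-1/2)$, which is what makes the two equivalent expressions for $D(\rho\Vert\sigma)$ in the main text agree and which forces $D(\rho\Vert\rho)=0$; any stray factor of two in the $g$-terms would immediately surface there.
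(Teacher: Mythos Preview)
Your approach is correct and is exactly the ``similar analysis'' the paper alludes to: start from the form $D(\rho\Vert\sigma)=\tfrac12[\ln Z_\sigma+\operatorname{Tr}\{G_\sigma V^\rho\}]-\sum_i g(\nu_i^\rho-1/2)$ (the $\gamma$ term dropping out for zero-mean states), identify $\nu_\pm^\rho$ from the explicit symplectic diagonalization of the standard form, and reduce $\operatorname{Tr}\{G_\sigma V^\rho\}$ to a $2\times2$ trace via the same block identities $(I_2\oplus\sigma_Z)\Omega(I_2\oplus\sigma_Z)=\left[\begin{smallmatrix}0&\sigma_Z\\-\sigma_Z&0\end{smallmatrix}\right]$ used in the proof of Lemma~\ref{lem:rel-ent-var-standard-form-simple}. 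The paper gives no proof beyond pointing to that analogy, so there is nothing further to compare.

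One remark on the point you already flagged: carrying your computation through literally yields $4\operatorname{Tr}\{\sigma_Z M\sigma_Z V_0^\rho\}$ with $M=S_0^\sigma\operatorname{arcoth}(2D^\sigma)S_0^\sigma$, i.e.\ a full $\sigma_Z$-conjugation of $M$, whereas the stated formula has only a single $\sigma_Z$. This is precisely the bookkeeping you warned yourself about; note that for the variance lemma the analogous expression also carries the full conjugation $\sigma_Z\Gamma'\sigma_Z$, so your derivation is internally consistent with the previous proof and the discrepancy is most plausibly a typographical slip in the lemma's statement rather than a gap in your argument.
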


\end{document}